\newcommand{\eps}{\varepsilon}
\renewcommand{\c}[1]{\ensuremath{\EuScript{#1}}}
\renewcommand{\b}[1]{\ensuremath{\mathbb{#1}}}
\renewcommand{\sf}[1]{\textsf{#1}}
\newcommand{\B}[1]{\ensuremath{\textbf{#1}}}
\renewcommand{\P}{\ensuremath{\B{Pr}}}
\newcommand{\sip}{\textsf{sip }}
\newcommand{\IP}[2]{\ensuremath{ \langle #1 , #2 \rangle}}
\newcommand{\wid}{\omega}
\newcommand\PPic[2]{
    \begin{minipage}{#1}%
        \hspace{-0.4cm}
        \makebox[0cm][l]{\includegraphics{#2}}
    \end{minipage}}
\newlength{\ppicwd}
\newcommand {\subplaatje} [2]
{
  \hfill
  \subfigure []
  {
    \includegraphics [#1] {figures/#2}
    \label {#2}
  }
}
\newcommand {\plaatjes} [3]
{
  \begin {figure} [tb]
    \centering
    #1
    \hfill
    \mbox {}
    \caption {#2}
    \label {#3}
  \end {figure}
}
\newcommand {\eenplaatje} [3] []
{
  \begin {figure} [tb]
    \centering
    \includegraphics [#1] {figures/#2}
    \caption {#3} \label {#2}
  \end {figure}
}
\newcommand {\tweeplaatjes} [4] []
{
  \plaatjes
  {
    \subplaatje {#1} {#2}
    \subplaatje {#1} {#3}
  } {#4} {#2+#3}
}
\newcommand {\drieplaatjes} [5] []
{
  \plaatjes
  {
    \subplaatje {#1} {#2}
    \subplaatje {#1} {#3}
    \subplaatje {#1} {#4}
  } {#5} {#2+#3+#4}
}
\newcommand {\vierplaatjes} [6] []
{
  \plaatjes
  {
    \subplaatje {#1} {#2}
    \subplaatje {#1} {#3}
    \subplaatje {#1} {#4}
    \subplaatje {#1} {#5}
  } {#6} {#2+#3+#4+#5}
}
\title{Shape Fitting on Point Sets with Probability Distributions}
\author{%
	\begin{tabular}{c}
	Maarten L\"offler\\
	\small Utrecht University\\
	\small loffler@cs.uu.nl
	\end{tabular}
	\and 
	\begin{tabular}{c}
	Jeff Phillips\\
	\small Duke University\\
	\small jeffp@cs.duke.edu
	\end{tabular}
}
\begin{document}
\begin{titlepage}
\maketitle

\begin{center}  \today  \end{center}

\begin{abstract} 
A typical computational geometry problem begins: Consider a set $P$ of $n$ points in $\b{R}^d$. However, many applications today work with input that is not precisely known, for example when the data is sensed and has some known error model. What if we do not know the set $P$ exactly, but rather we have a probability distribution $\mu_p$ governing the location of each point $p \in P$?

Consider a set of (non-fixed) points $P$, and let $\mu_P$ be the probability distribution of this set. We study several measures (e.g. the radius of the smallest enclosing ball, or the area of the smallest enclosing box) with respect to $\mu_P$. The solutions to these problems do not, as in the traditional case, consist of a single answer, but rather a distribution of answers. We hence describe a data structure, called an $\eps$-quantization, that can approximate such a distribution within $\eps$ in $O(1/\eps)$ space. We also extend this data structure to answer higher dimensional queries of $\mu_P$ (e.g. the length and width of the smallest enclosing box in $\b{R}^2$).

Rather than compute a new data structure for each measure we are interested in, we can also compute a single data structure that allows us to answer many questions at once. This data structure, an $(\eps, \alpha)$-kernel, is based on $\alpha$-kernel coresets and can be used to create approximate $\eps$-quantizations for geometric problems involving extent measures.

Thirdly, we introduce a data structure that can answer questions of the type `what is the probability that point $q$ is in the smallest enclosing ball of $P$?' For a given distribution $\mu_P$ and summarizing shape (e.g. the smallest enclosing ball), we define an $\eps$-shape inclusion probability function to be a function that assigns to a query point $q \in \b{R}^d$ a value that is at most $\eps$ away from the probability that $q$ is contained in this summarizing shape of $P$. This results in a probability description more directly linked to the space that the input points live in.

We provide simple and efficient randomized algorithms for computing all of these data structures, which are easy to implement and practical. We provide some experimental results to assert this. We also provide more involved deterministic algorithms for $\eps$-quantizations for problems involving shapes with bounded VC-dimension that run in time polynomial in $n$ and $1/\eps$.
\end{abstract}
\end{titlepage}

\section{Introduction}

The input for a typical computational geometry problem is a set $P$ of $n$ points in $\b{R}^2$, or more generally $\b{R}^d$.
Traditionally, such a set of points is assumed to be known exactly, and indeed, in the 1980s and 1990s such an assumption was often justified because much of the input data was hand-constructed for computer graphics or simulations.
However, in many modern applications the input is sensed from the real world, and such data is inherently imprecise.
Therefore, there is a growing need for methods that are able to deal with imprecision.

An early model to quantify imprecision in geometric data, motivated by finite precision of coordinates, is {\it $\eps$-geometry}, introduced by Guibas \etal~\cite {gss-egbra-89}. In this model, the input is given by a traditional point set $P$, where the imprecision is modeled by a single extra parameter $\eps$. The true point set is not known, but it is certain that for each point in $P$ there is a point in the disk of radius $\eps$ around it.
This model has proven fruitful and is still often used due to its simplicity.
To name a few, Guibas \etal~\cite {gss-cscah-93} define \emph {strongly convex} polygons: polygons that are guaranteed to stay convex, even when the vertices are perturbed by $\eps$. 
Bandyopadhyay and Snoeyink~\cite{bs-ads-04} compute the set of all potential simplices in $\b R^2$ and $\b R^3$ that could belong to the Delaunay triangulation.
Held and Mitchell~\cite {hm-ticpps-08} and L\"offler and Snoeyink~\cite {ls-dtip-08} study the problem of preprocessing a set of imprecise points under this model, so that when the true points are specified later some computation can be done faster.

A more involved model for imprecision can be obtained by not specifying a single $\eps$ for all the points, but allowing a different radius for each point, or even other shapes of imprecision regions. This allows for modeling imprecision that comes from different sources, independent imprecision in different dimensions of the input, etc. This extra freedom in modeling comes at the price of more involved algorithmic solutions, but still many results are available.
Nagai and Tokura~\cite {nt-teb-00} compute the union and intersection of all possible convex hulls to obtain bounds on any possible solution, as does Ostrovsky-Berman and Joskowicz~\cite {obj-ue-05} in a setting allowing some dependence between points.
Van Kreveld and L\"offler~\cite {kl-bgmips-06} study the problem of computing the smallest and largest possible values of several geometric extent measures, such as the diameter or the radius of the smallest enclosing ball, where the points are restricted to lie in given regions in the plane.
Kruger~\cite {k-bmips-08} extends some of these results to higher dimensions.

However, some applications dealing with sensed data provide more information about the imprecision than just a region, and a probability distribution governing the expected location of each point may be available.
In robotic mapping~\cite{EP04} careful error models are used to govern the laser range finder data.
In data mining~\cite{AS00} original data is often perturbed by a known model for privacy preserving purposes.
In databases~\cite{CKP03} large data sets may be summarized as probability distributions to store them more compactly.
The atoms of a protein structure have probabilistic distributions as determined by NMR spectroscopy reconstruction algorithms~\cite{PYCDB06}, rotamers, or other variability.
Similarly, probability distribution models are produced for GIS data, data from sensor networks, astrological data, and many other sources.
In these cases, the above threshold error models could be adapted to this data by choosing an error distance beyond which the probability is below a certain threshold. However, the solutions produced under the threshold error models depend heavily on the boundary cases of the error model, while it is reasonable to expect the points are more likely to appear near the ``center'' of the regions.
Working directly with probability distributions can provide more accurate answers to geometric questions about such sets of points.

This paper studies the computation of extent measures on uncertain point sets governed by probability distributions.
Unsurprisingly, directly using the probability distribution error model creates harder algorithmic problems, and many questions may be impossible to answer exactly under this model.  But since the data is imprecise to begin with, it is also reasonable to construct approximate answers.  Our algorithms have approximation guarantees with respect to the original distributions, not an approximation of them.
Instead of reinventing computational geometry for probability distributions, this paper reduces problems on data governed by probability distributions to discrete and well-studied computational geometry problems on precise point sets.

\subsection{Problem Statement}
Let $\mu_p : \b{R}^d \to \b{R}^+$ describe the probability distribution of a point $p$ where $\int_{x \in \b{R}^d} \mu_p(x) \; dx = 1$.
Let $\mu_P : \b{R}^d \times \b{R}^d \times \ldots \times \b{R}^d \to \b{R}^+$ describe the distribution of a point set $P$ by the joint probability over each $p \in P$.
For simplicity we refer to the space $\b{R}^d \times \ldots \times \b{R}^d$ as $\b{R}^{dn}$ when it is a product of $n$ $d$-dimensional spaces.
For this paper we will assume $\mu_P(q_1, q_2, \ldots, q_n) = \prod_{i=1}^n \mu_{p_i}(q_i)$, so the distribution for each point is independent, although for our randomized algorithms this restriction can be easily circumvented.

Given a distribution $\mu_P$ we can ask questions traditionally asked of point sets that are given precisely, instead of as distributions (e.g. the diameter or the axis-aligned bounding box). In the presence of imprecision, the answer to such a question if not a single value or structure, but also a \emph {distribution} of answers.
The point of this paper is not just how to answer geometric questions about these distributions, but how to concisely represent them.

\paragraph{$\eps$-Quantizations.}
Let $f : \b{R}^{dn} \to \b{R}$ be a single-valued function on a fixed point set, such as the radius of the minimum enclosing ball.
For a query value $v$,
$$
f^{\leq}_{\mu_P}(v) = \int_{Q \in \b{R}^{dn}} 1(f(Q) \leq v) \cdot \mu_P(Q) \, dQ,
$$
where $Q$ is taken over all size $n$ point sets in $\b{R}^d$ and $1(\cdot)$ is the indicator function, is the probability that $f$ will yield a value less than or equal to $v$, given the distribution $\mu_P$. Then $f^{\leq}_{\mu_P}$ is the cumulative density function of the distribution of possible values that $f$ can take.
Ideally, we would return the function $f^{\leq}_{\mu_P}$ so we could quickly answer any query exactly, however, it is not clear how to compute closed forms for such functions for one specific value, let alone all values.
Rather, we introduce a data structure, which we call an \emph{$\eps$-quantization}, to answer such queries approximately and efficiently.
For an isotonic function $f^{\leq}_{\mu_P}$ and any value $v$, an $\eps$-quantization, $R$, guarantees  that $|R(v) - f^{\leq}_{\mu_P}(v)| \leq \eps$.  Furthermore, the size of an $\eps$-quantization is always dependent only on $\eps$, not on $|P|$ or $\mu_P$.

Sometimes a statistic for a point set has multiple values, such as the width of the minimum enclosing axis-aligned rectangle along the $x$-axis and the $y$-axis.  For a function $f : \b{R}^{dn} \to \b{R}^k$ let
$$
f^{\preceq}_{\mu_P}(v_1, \ldots, v_k) = \int_{Q \in \b{R}^{dn}} 1(f(Q) \preceq (v_1, \ldots, v_k) ) \cdot \mu(Q) \, dQ,
$$
where for a point $p \in \b{R}^k$ the operation $p \preceq (v_1, \ldots, v_k)$ determines whether $p_i \leq v_i$ for each $i$, where $p_i$ is the $i$th coordinate of $p$.  Note that $f^{\preceq}_{\mu_P}$ must be isotonic in the sense that for two points $p,q \in \b{R}^{k}$ if $p \preceq q$ then $f^{\preceq}_{\mu_P}(p) \leq f^{\preceq}_{\mu_P}(q)$.
A \emph{$k$-variate $\eps$-quantization} $R$ for an isotonic function $f^{\preceq}_{\mu_P} : \b{R}^k \to [0,1]$ and for a query $v \in \b{R}^k$ guarantees  $|R(v) - f^{\preceq}_{\mu_P}(v)| \leq \eps$.  The size of a multivariate $\eps$-quantization is dependent only on $\eps$ and $k$.

\paragraph{$(\eps, \alpha)$-Kernels.}
Rather than compute a new data structure for each measure we are interested in, we can also compute a single data structure that allows us to answer many questions at once.
For an isotonic function $f_{\mu_P}^{\leq} : \b{R}^+ \to [0,1]$ an \emph{$(\eps, \alpha)$-quantization} $M$ guarantees that there exists a point $x^\prime$ such that
(1) $|x-x^\prime| \leq \alpha x$ and
(2) $|M(x) - f_{\mu_P}^{\leq}(x^\prime)| \leq \eps$.
An \emph{$(\eps, \alpha)$-kernel} is a data structure that can produce an $(\eps, \alpha)$-quantization for $f_{\mu_P}^{\leq}$ where $f$ measures the width in any directions and whose size depends only on $\frac{1}{\eps}$ and $\frac{1}{\alpha}$.

\paragraph{Shape Inclusion Probabilities.}
To summarize a point set $P \subset \b{R}^d$, we often approximate it with a shape, such as the smallest enclosing ball.  For $k$-variate $\eps$-quantizations with large $k$, it can be hard to visualize the connection to the ambient $d$-dimensional space of the data points (i.e. for smallest enclosing ball we could use a $(d+1)$-variate $\eps$-quantization to measure the $d$ coordinates of the center point and the radius).
Instead, for a summarizing shape we may wish to study a \emph{shape inclusion probability function} $h_{\mu_P} : \b{R}^d \to [0,1]$ (or \sip\ function) which describes the probability that a given point $x \in \b{R}^d$ is included in the summarizing shape\footnote{For technical reasons, if there are (degenerately) multiple optimal summarizing shapes, we say each are equally likely to be the summarizing shape of the point set.}.
Again, there does not seem to be a closed form for many of these functions.
Rather we calculate an $\eps$-\sip\ function $\hat{h} : \b{R}^d \to [0,1]$ such that
$
\forall_{x \in \b{R}^d} \left| h(x) - \hat{h}(x) \right| \leq \eps.
$
The size of an $\eps$-sip depends only on $\eps$ and the complexity of the summarizing shape.

\subsection{Our Results}
We describe simple and practical randomized algorithms for computing $\eps$-quantizations, $\eps$-\sip\ functions, and $(\eps, \alpha)$-kernels.  Let $T_f(n)$ be the time it takes to calculate a summarizing shape of a set of $n$ points $Q \subset \b{R}^d$, which generates a statistic $f(Q)$.  We can calculate an $\eps$-quantization of $f^{\leq}_{\mu_P}$, with probability $1-\delta$, in time $O(T_f(n) \frac{1}{\eps^2} \log \frac{1}{\eps \delta})$.  For univariate $\eps$-quantizations the size is $O(\frac{1}{\eps})$, and for $k$-variate $\eps$-quantizations the size is $O(k^2 \frac{1}{\eps} \log^{2k} \frac{1}{\eps} )$.
With probability $1-\delta$, we can calculate an $\eps$-\sip\ function of size $O(\frac{1}{\eps^2} \log \frac{1}{\eps \delta})$ in time $O(T_f(n) \frac{1}{\eps^2} \log \frac{1}{\eps \delta})$.
With probability $1-\delta$, we can calculate an $(\eps, \alpha)$-kernel of size $O(\frac{1}{\alpha^{(d-1)/2}} \frac{1}{\eps^2} \log \frac{1}{\eps \delta})$ in time $O((n + \frac{1}{\alpha^{d-3/2}}) \frac{1}{\eps^2} \log \frac{1}{\eps \delta})$.
All of these randomized algorithms are simple and practical, as demonstrated by some experimental results.

In addition, we provide deterministic algorithms for computing $\eps$-quantizations of a specific class of functions. If $\c A$ is a family of geometric shapes, such that $(\b{R}^d, \c{A})$ has bounded VC-dimension, and $f : \b R^{dn} \to \b R^k$ is a function that describes some statistics on the smallest element from $\c A$ that encloses the points (e.g. the radius of the smallest enclosing ball), then an $\eps$-quantization for $f$ can be computed in deterministic time $O( \textrm{poly}(n,\frac{1}{\eps}))$, as described in Table \ref{tbl:results}.

This paper describes results for shape fitting problems for distributions of point sets in $\b{R}^d$, in particular, we will use the smallest enclosing ball and the axis-aligned bounding box as running examples in the algorithm descriptions. We believe, though, that the concept of $\eps$-quantizations should extend to many other problems with uncertain data. In fact, variations of our randomized algorithm should work for a more general array of problems.

\section{Preliminaries: $\eps$-Samples and $\alpha$-Kernels}
\label{sec:prelim}

\paragraph{$\eps$-Samples.}
For a set $P$ (in our context a point set), let $\c{A}$ be a set of subsets of $P$ which for instance could be induced by containment in a shape from some family of geometric shapes.  
The pair $(P, \c{A})$ is called a \emph{range space}.  We say that $Q$ is an \emph{$\eps$-sample} of $(P,\c{A})$ if 
$$
\forall_{R \in \c{A}} \left|\frac{\phi(R \cap Q)}{\phi(Q)} - \frac{\phi(R \cap P)}{\phi(P)}\right| \leq \eps,
$$
where $|\cdot|$ takes the absolute value and $\phi(\cdot)$ returns the measure of a point set.  In the discrete case $\phi(Q)$ returns the cardinality of $Q$.  We say $\c{A}$ \emph{shatters} a set $S$ if every subset of $P$ is equal to $R \cap S$ for some $R \in \c{A}$.  The cardinality of the largest discrete set $X \subseteq P$ that $\c{A}$ can shatter is known as the \emph{VC-dimension} of $(P,\c{A})$.  

When $(P, \c{A})$ has constant VC-dimension, we can create an $\eps$-sample $Q$ of $(P, \c{A})$, with probability $1-\delta$, by uniformly sampling $O(\nu \frac{1}{\eps^2} \log \frac{\nu}{\delta \eps})$ points from $P$~\cite{VC71}.  There exist deterministic techniques to create $\eps$-samples~\cite{Mat91,CM96} of size $O(\nu \frac{1}{\eps^2} \log \frac{1}{\eps})$ in time $O(\nu^{3\nu} n (\frac{1}{\eps^2} \log \frac{\nu}{\eps})^\nu)$.  
When $P$ is a point set in $\b{R}^d$ and the family of ranges $\c{Q}_k$ is determined by inclusion of convex shapes whose sides have one of $k$ predefined normal directions, such as the set of axis-aligned boxes, then 
an $\eps$-sample for $(P, \c{Q}_k)$ of size $O(\frac{k}{\eps} \log^{2k} \frac{1}{\eps})$ can be constructed in $O(\frac{n}{\eps^3} \log^{6k} \frac{1}{\eps})$ time~\cite{Phi08}.  

When we have a distribution $\mu : \b{R}^d \to \b{R}^+$, such that $\int_{x \in \b{R}} \mu(x) \; dx = 1$, we can think of this as the set $P$ of points in $\b{R}^d$, where the weight $w$ of a point $p \in \b{R}^d$ is $\mu(p)$.  
To simplify notation, we write $(\mu, \c{A})$ as a range space where the ground set is this set $P = \b{R}^d$ weighted by the distribution $\mu$.  Let it have VC-dimension $\nu$.  
For distribution $\mu$ that is polygonally approximable~\cite{Phi08} with a constant number of facets, we can construct an $\eps$-sample of size $O(\frac{\nu}{\eps^2} \log \frac{\nu}{\eps})$ in time $O(\frac{\nu}{\eps^2} \log^2 \frac{\nu}{\eps})$.
A longer primer on $\eps$-samples is in Appendix \ref{app:eps-sample}.

\paragraph{$\alpha$-Kernels.}
Given a point set $P \in \b{R}^d$ of size $n$ and a direction $u \in \b{S}^{d-1}$, let $P[u] = \arg\max_{p \in P} \IP{p}{u}$, where $\IP{\cdot}{\cdot}$ is the inner product operator.  
Let $\wid(P,u) = \IP{P[u] - P[-u]}{u}$ describe the width of $P$ in direction $u$.  
We say that $K \subseteq P$ is an \emph{$\alpha$-kernel} of $P$ if for all $u \in \b{S}^{d-1}$
$$
\wid(P,u) - \wid(K,u) \leq \alpha \cdot \wid(P,u).
$$
$\alpha$-kernels of size $O(\frac{1}{\alpha^{(d-1)/2}})$ can be calculated in time $O(n + \frac{1}{\alpha^{d-3/2}})$~\cite{Cha06}.  Computing many extent related problems such as diameter and smallest enclosing ball on the $\alpha$-kernel approximates the function on the original set~\cite{AHV04}.

\section{Randomized Algorithm for $\eps$-Quantizations}

We start with a general algorithm (Algorithm \ref{alg:rand-draw}) which will be made specific in several places in the paper.  
We assume we can draw a point from $\mu_p$ for each $p \in P$ in constant time; if the time depends on some other parameters, the runtimes can be easily adjusted.  

\begin{algorithm}[h!!t]
\caption{Approximate $\mu_P$ with regard to a family of shapes $\c{S}$ or function $f_{\c{S}}$
\label{alg:rand-draw}}
\begin{algorithmic}[1]
\FOR {$i = 1$ \textbf{to} $m = O(\frac{1}{\eps^2} \log \frac{1}{\eps \delta})$}
  \FOR {$p_j \in P$}
    \STATE Generate $q_j \in \mu_{p_j}$.
  \ENDFOR
  \STATE Set $V_i = f_{\c{S}}(\{q_1, q_2, \ldots, q_n\})$.  
\ENDFOR
\STATE Reduce or Simplify the set $V = \{ V_i\}_{i=1}^m$.
\end{algorithmic}
\end{algorithm}

\paragraph{Defining $\eps$-quantizations.}
For an isotonic function $h : \b{R} \to [0,1]$, an $\eps$-quantization, $R$, is a set of points where for any $t \in \b{R}$, $| h(t) - R(t) | \leq \eps$.  We let $R(t) = \frac{1}{|R|} \sum_{r \in R} 1(r \leq t)$.  Since $h$ has range $[0,1]$ and is isotonic, an $\eps$-quantization requires only $O(1/\eps)$ points.  
Figure \ref{fig:1e-quants} shows a illustration of how an $\eps$-quantization approximates a smooth function.  
Because $h$ is isotonic there exists a function $g: \b{R} \to \b{R}^+$ such that $h(t) = \int_{x=-\infty}^t g(x) \; dx$ where $\int_{x=-\infty}^\infty g(x) \; dx = 1$.  
Thus an $\eps$-sample of $(g, \c{I}_+)$ is an $\eps$-quantization of $h$, where $\c{I}^+$ is all $1$-sided intervals.   

\vierplaatjes {uq-true} {uq-points} {uq-approx} {uq-both}
{\label{fig:1e-quants}
  (a) The true form of the function. (b) The $\eps$-quantization as a point set in $\b{R}$. (b) The inferred curve in $\b{R}^2$. (d) Overlay of the two images.}

For an isotonic function $h : \b{R}^k \to [0,1]$ a $k$-variate $\eps$-quantization, $R$, is a set of points in $\b{R}^k$ such that for any $p \in \b{R}^k$, $| h(p) - R(p) | \leq \eps$.  
For $p \in \b{R}^k$ let $R(p) = \frac{1}{|R|} \sum_{q \in R} 1(q \preceq p)$.  
Because $h$ is isotonic, there exists a function $g : \b{R}^k \to \b{R}^+$ such that $h(p) = \int_{x \preceq p} g(x) \; dx$ and $\int_{x \in \b{R}^d} g(x) \; dx = 1$.
Thus an $\eps$-sample of $(g, \c{R}_+)$ is an $\eps$-quantization of $h$, where $\c{R}_+$ describes ranges $R_p \in \c{R}_+$ defined by all $q$ such that $q \preceq p$ for any $p$.    See Figure \ref{fig:ke-quants} for an illustration of $k$-variate function $h$ and a $k$-variate $\eps$-quantization approximating it.

\vierplaatjes {mq-true} {mq-points} {mq-approx} {mq-both}
{\label{fig:ke-quants}
  (a) The true form of the multivariate function. (b) The $\varepsilon$-quantisation as a point set in $k$-space. (b) The inferred surface in $k+1$-space. (d) Overlay of the two images.}

\paragraph{Algorithm for $\eps$-quantizations.}
For a function $f$ on a point set $P$ of size $n$, it takes $T_f(n)$ time to evaluate $f(P)$.
We now construct $f^{\leq}_{\mu_P}$ by adapting Algorithm \ref{alg:rand-draw} as follows.  First draw a sample point $q_j$ from each $\mu_{p_j}$ for $p_j \in P$, then evaluate $V_i = f(\{q_1, \ldots, q_n\})$.  The fraction of trials of this process that produces a value less than $v$ is the estimate of $f^{\leq}_{\mu_P}(v)$.  
Finally reduce the size of $V$ be returning $\frac{2}{\eps}$ evenly spaced points according to the sorted order.

\begin{theorem}
For a distribution $\mu_P$ of $n$ points, 
there exists a univariate $\eps$-quantization of size $O(\frac{1}{\eps})$ for $f^{\leq}_{\mu_P}$, and it can be constructed in $O(T_f(n) \frac{1}{\eps^2} \log \frac{1}{\eps \delta})$ time, with success probability $1-\delta$, where $T_f(n)$ is the time it takes to compute $f(Q)$ for any point set $Q$ of size $n$.  
\end{theorem}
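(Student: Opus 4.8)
The plan is to recognize the construction as drawing a random $\eps$-sample from the one-dimensional range space $(g,\c{I}_+)$ introduced just before the theorem, where $g$ is the density whose cumulative distribution is $f^{\leq}_{\mu_P}$. First I would observe that a single trial of Algorithm~\ref{alg:rand-draw} — drawing $q_j \sim \mu_{p_j}$ independently for each $p_j \in P$ and setting $V_i = f(\{q_1,\ldots,q_n\})$ — produces a real value $V_i$ whose distribution has cumulative density function exactly $f^{\leq}_{\mu_P}$. This is immediate from the definition $f^{\leq}_{\mu_P}(v) = \int 1(f(Q)\le v)\,\mu_P(Q)\,dQ$ together with the independence assumption $\mu_P(q_1,\ldots,q_n)=\prod_i \mu_{p_i}(q_i)$, which guarantees that the joint draw $(q_1,\ldots,q_n)$ is distributed according to $\mu_P$. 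Hence $V=\{V_1,\ldots,V_m\}$ is a set of $m$ points drawn i.i.d.\ from $(g,\c{I}_+)$.

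Next I would invoke the $\eps$-sample bound for range spaces of constant VC-dimension recalled in Section~\ref{sec:prelim}: since one-sided intervals on the line give $(\b{R},\c{I}_+)$ VC-dimension $1$, taking $m = O(\frac{1}{\eps^2}\log\frac{1}{\eps\delta})$ i.i.d.\ samples yields, with probability at least $1-\delta$, an $(\eps/2)$-sample of $(g,\c{I}_+)$. By the observation in the text that an $\eps$-sample of $(g,\c{I}_+)$ is an $\eps$-quantization of its cumulative function, the set $V$ is then an $(\eps/2)$-quantization of $f^{\leq}_{\mu_P}$; that is, for every $t\in\b{R}$, $|V(t) - f^{\leq}_{\mu_P}(t)| \le \eps/2$, where $V(t)=\frac{1}{m}\sum_i 1(V_i\le t)$.

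Then I would carry out the reduction in the last line of Algorithm~\ref{alg:rand-draw}: sort $V$ and keep $\frac{2}{\eps}$ evenly spaced points $R$ in sorted order. For any query $t$, the empirical fractions $R(t)$ and $V(t)$ differ by at most the spacing, i.e.\ $|R(t)-V(t)| \le \eps/2$; combining with the previous bound gives $|R(t) - f^{\leq}_{\mu_P}(t)| \le \eps$, so $R$ is an $\eps$-quantization of size $O(1/\eps)$. For the running time, each of the $m$ trials costs $O(n)$ to sample the points plus $T_f(n)$ to evaluate $f$, i.e.\ $O(T_f(n))$ under the natural assumption $T_f(n)=\Omega(n)$, and the final sort of the $m$ values contributes only a lower-order $O(m\log m)$ term, for a total of $O(T_f(n)\frac{1}{\eps^2}\log\frac{1}{\eps\delta})$.

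The only delicate points, rather than real obstacles, are (i) splitting the error budget explicitly — allocating $\eps/2$ to the sampling stage and $\eps/2$ to the reduction stage so that the total is $\eps$, which means running Algorithm~\ref{alg:rand-draw} with the $\eps/2$ parameter in $m$; and (ii) justifying that the $V_i$ really are distributed according to $g$, which is exactly where the independence of the $\mu_{p_i}$ enters (for dependent distributions one would instead sample jointly from $\mu_P$, as the remark in the problem statement notes). Everything else is a direct application of the VC $\eps$-sample theorem quoted in Section~\ref{sec:prelim}.
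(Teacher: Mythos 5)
Your proposal is correct and follows essentially the same route as the paper's own proof: view each $V_i = f(Q_i)$ as an i.i.d.\ sample from the density $g$ underlying $f^{\leq}_{\mu_P}$, apply the standard VC $\eps$-sample bound for one-sided intervals to get an $\frac{\eps}{2}$-quantization with probability $1-\delta$, and then reduce to $\frac{2}{\eps}$ evenly spaced points in sorted order, splitting the error budget $\eps/2 + \eps/2$. Your added remarks on the VC-dimension of $\c{I}_+$, the role of independence, and the runtime accounting are consistent with (and slightly more explicit than) the paper's argument.
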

\begin{proof}
Because $f^{\leq}_{\mu_P} : \b{R} \to [0,1]$ is an isotonic function, there exists another function $g: \b{R} \to \b{R}^+$ such that $f^{\leq}_{\mu_P}(t) = \int_{x = -\infty}^t g(x) \; dx$ where $\int_{\b{R}} g(x) \;dx = 1$.  And thus an $\eps$-sample of $(g, \c{I}_+)$ is an $\eps$-quantization of $f^{\leq}_{\mu_P}$.  

By drawing a random sample $q_i$ from each $\mu_{p_i}$ for $p_i \in P$, we are drawing a random point set $Q$ from $\mu_P$.  Thus $f(Q)$ is a random sample from $g$.  Hence, using the standard randomized construction for $\eps$-samples, $O(\frac{1}{\eps^2} \log \frac{1}{\eps \delta})$ such samples will generate an $\frac{\eps}{2}$-sample for $g$, and hence an $\frac{\eps}{2}$-quantization for $f^{\leq}_{\mu_P}$, with probability $1-\delta$.  

Since in an $\frac{\eps}{2}$-quantization, every value is off from the true function by at most $\frac{\eps}{2}$, then we can take an $\frac{\eps}{2}$-quantization of the step function and still have an $\eps$-quantization of the true function.  
Thus, we can reduce this to an $\eps$-quantization of size $O(\frac{1}{\eps})$ by taking a subset of $\frac{2}{\eps}$ points spaced evenly according to their sorted order.  
\end{proof}

We can construct $k$-variate $\eps$-quantizations using the same basic procedure as in Algorithm \ref{alg:rand-draw}.  The output $V_i$ of $f_{\c{S}}$ is $k$-variate and thus results in a $k$-dimensional point.  As a result, the reduction of the final size of the point set requires more advanced procedures.  

\begin{theorem}
For a distribution $\mu_P$ of $n$ points, 
there exists a $k$-variate $\eps$-quantization of size $O(\frac{k^2}{\eps} \log^{2k} \frac{1}{\eps})$ for $f^{\preceq}_{\mu_P}$, and it can be constructed in $O(T_f(n) \frac{k}{\eps^2} \log \frac{k}{\eps \delta} + k^2 \frac{1}{\eps^5} \log^{6k} \frac{1}{\eps} \log \frac{1}{\eps \delta})$ time, with success probability $1-\delta$, where $T_f(n)$ is the time it takes to compute $f(Q)$ for any point set $Q$ of size $n$.  
\label{thm:k-var-q}
\end{theorem}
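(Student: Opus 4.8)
The plan is to mirror the structure of the univariate proof, replacing the one-dimensional $\eps$-sample machinery with the $k$-dimensional analogue for the range space $(g, \c{R}_+)$, and then doing the final size reduction with the box-sample bound of~\cite{Phi08} cited in the preliminaries. First I would observe, exactly as in the univariate case, that because $f^{\preceq}_{\mu_P} : \b{R}^k \to [0,1]$ is isotonic there is a density $g : \b{R}^k \to \b{R}^+$ with $f^{\preceq}_{\mu_P}(p) = \int_{x \preceq p} g(x)\,dx$ and total mass $1$, so that an $\eps$-sample of $(g, \c{R}_+)$ is a $k$-variate $\eps$-quantization of $f^{\preceq}_{\mu_P}$. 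The ranges $\c{R}_+$ are bottom-left orthants, which are intersections of $k$ axis-aligned halfspaces; hence $(g, \c{R}_+)$ has VC-dimension $O(k)$, and moreover $\c{R}_+ \subseteq \c{Q}_k$ in the notation of the preliminaries (convex shapes with $k$ fixed normal directions).

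Next I would run Algorithm~\ref{alg:rand-draw} with $m = O(\frac{k}{\eps^2}\log\frac{k}{\eps\delta})$ iterations: each iteration draws $Q \sim \mu_P$ and records the $k$-dimensional point $V_i = f(Q)$, which by independence of the draw is a random sample from $g$. By the standard randomized $\eps$-sample construction~\cite{VC71} for a range space of VC-dimension $O(k)$, with probability $1-\delta$ the multiset $V = \{V_i\}$ is an $\frac{\eps}{2}$-sample of $(g, \c{R}_+)$, hence an $\frac{\eps}{2}$-quantization. This accounts for the first term $O(T_f(n)\frac{k}{\eps^2}\log\frac{k}{\eps\delta})$ in the running time. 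The set $V$ has size $O(\frac{k}{\eps^2}\log\frac{k}{\eps\delta})$, which still depends on $n$ only through $T_f$ but is larger than the target $O(\frac{k^2}{\eps}\log^{2k}\frac{1}{\eps})$, so a reduction step is needed.

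The reduction is the step I expect to be the main point to get right. I would treat $V$ as a finite weighted point set and compute an $\frac{\eps}{2}$-sample of the range space $(V, \c{Q}_k)$ — equivalently of $(V,\c{R}_+)$, since orthants are a subfamily — using the deterministic construction of~\cite{Phi08}, which produces an $\eps'$-sample of size $O(\frac{k}{\eps'}\log^{2k}\frac{1}{\eps'})$ in time $O(\frac{|V|}{\eps'^3}\log^{6k}\frac{1}{\eps'})$. With $\eps' = \eps/2$ this gives the claimed output size $O(\frac{k^2}{\eps}\log^{2k}\frac{1}{\eps})$ (the extra factor $k$ relative to the bare $\eps$-sample bound coming from the $\log^{2k}$ term combined with bookkeeping, or from carrying the VC-dimension $\nu = O(k)$ explicitly), and substituting $|V| = O(\frac{k}{\eps^2}\log\frac{k}{\eps\delta})$ into the construction time yields the second term $O(k^2\frac{1}{\eps^5}\log^{6k}\frac{1}{\eps}\log\frac{1}{\eps\delta})$. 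Finally, the triangle-inequality argument from the univariate proof applies verbatim: an $\frac{\eps}{2}$-sample of an $\frac{\eps}{2}$-quantization is an $\eps$-quantization of $f^{\preceq}_{\mu_P}$, since every relevant query value is approximated to within $\frac{\eps}{2} + \frac{\eps}{2}$. The only subtlety to check carefully is that reducing with respect to $\c{Q}_k$ (boxes) rather than $\c{R}_+$ (orthants) is legitimate — it is, because $\c{R}_+ \subseteq \c{Q}_k$, so an $\eps'$-sample for the larger family is a fortiori one for the smaller — and that the two approximation errors are measured against the same fixed density $g$, which they are by construction.
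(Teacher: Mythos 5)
Your proposal is correct and follows essentially the same route as the paper's proof: reduce to an $\eps$-sample of the range space $(g,\c{R}_+)$ of VC-dimension $O(k)$ via random draws evaluated through $f$, then shrink the sample with the deterministic construction of~\cite{Phi08}, with the extra factor of $k$ in size and time coming from the $O(k)$ storage per point. Your added details --- the explicit $\frac{\eps}{2}+\frac{\eps}{2}$ triangle-inequality split and the remark that $\c{R}_+\subseteq\c{Q}_k$ legitimizes reducing with respect to boxes --- are just careful fillings-in of steps the paper leaves implicit.
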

\begin{proof}
In the $k$-variate case there exists a function $g : \b{R}^k \to \b{R}^+$ such that $f^{\preceq}_{\mu_P}(v) = \int_{x \preceq v} g(x) \; dx$ where $\int_{\b{R}^k} g(x) \; dx = 1$.  Then a random point set $Q$ from $\mu_P$, evaluated as $f(Q)$, is still a random sample from the $k$-variate distribution described by $g$.  Thus, with probability $1-\delta$, a set of $O(\frac{k}{\eps^2} \log \frac{1}{\eps \delta})$ such samples is an $\eps$-sample of $(g,\c{R}_+)$, which has VC-dimension $k$, and the samples are also a $k$-variate $\eps$-quantization of $f^{\preceq}_{\mu_P}$.  

We can then reduce the size of the $\eps$-quantization to $O(\frac{k}{\eps} \log^{2k} \frac{1}{\eps})$ \cite{Phi08} (or to $O(\frac{k}{\eps^2} \log \frac{1}{\eps})$ \cite{CM96}), since the VC-dimension is $k$ and each data point requires $O(k)$ storage.  
\end{proof}

\section{$(\eps, \alpha)$-Kernels}
The above construction works for a fixed family of summarizing shapes.  This section builds a single data structure, an $(\eps, \alpha)$-kernel, for a distribution $\mu_P$ in $\b{R}^d$ that can be used to construct $(\eps, \alpha)$-quantizations for several families of summarizing shapes.  In particular, an $(\eps, \alpha)$-kernel of $\mu_P$ is a data structure such that in any query direction $u \in \b{S}^{d-1}$ we can create an $(\eps, \alpha)$-quantization of $\wid(\cdot, u)$, the width in direction $u$.  
This data structure introduces a parameter $\alpha$, which deals with geometric error, in addition to the error parameter $\eps$, which deals with probability error. 

We follow the randomized framework described above as follows.
Let $\c{K}$ be an $(\eps, \alpha)$-kernel consisting of $m = O(\frac{1}{\eps^2} \log \frac{1}{\eps \delta})$ $\alpha$-kernels, where each $\alpha$-kernel $K_j$ approximates a point set $Q_j$ drawn randomly from $\mu_P$.  Given $\c{K}$, we can then create an $(\eps, \alpha)$-quantization for the width of $\mu_P$ in any direction $u \in \b{S}^{d-1}$.  
Specifically, let $M = \{\wid(K_j, u) \}_{j=1}^m$.

\begin{lemma}
With probability $1-\delta$, 
$M$ is an $(\eps, \alpha)$-quantization of the width of $\mu_P$ in direction $u$.
\end{lemma}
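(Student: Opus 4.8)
The statement asks to show that $M = \{\wid(K_j,u)\}_{j=1}^m$ is an $(\eps,\alpha)$-quantization of $\wid(\cdot,u)$ on $\mu_P$. Recall from the definition that this means: for every query width $x$, there is an $x'$ with $|x - x'| \le \alpha x$ and $|M(x) - \wid^{\le}_{\mu_P}(x')| \le \eps$, where $M(x) = \frac{1}{m}\sum_j 1(\wid(K_j,u) \le x)$. The plan is to combine two independent sources of error — the probabilistic sampling error controlled by $\eps$, and the geometric kernel error controlled by $\alpha$ — by passing through the exact-width quantities $\{\wid(Q_j,u)\}_{j=1}^m$ as an intermediate object.

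First I would argue the probabilistic half exactly as in the first theorem: since each $Q_j$ is an independent draw from $\mu_P$, the value $\wid(Q_j,u)$ is an i.i.d.\ sample from the one-dimensional distribution $g$ whose CDF is $\wid^{\le}_{\mu_P}(\cdot)$. Hence, by the standard randomized $\eps$-sample bound, with probability $1-\delta$ the multiset $\{\wid(Q_j,u)\}_j$ is an $\eps$-sample of $(g,\c{I}_+)$, i.e.\ $\big|\frac{1}{m}\sum_j 1(\wid(Q_j,u)\le t) - \wid^{\le}_{\mu_P}(t)\big| \le \eps$ for all $t$. Call this empirical CDF $\tilde M(t)$.

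Second, the geometric half: by the $\alpha$-kernel property, for each $j$ and the fixed direction $u$ we have $\wid(K_j,u) \le \wid(Q_j,u)$ and $\wid(Q_j,u) - \wid(K_j,u) \le \alpha\,\wid(Q_j,u)$, so $\wid(K_j,u) \in [(1-\alpha)\wid(Q_j,u),\ \wid(Q_j,u)]$. The key observation is then that for any threshold $x$, the event $\{\wid(K_j,u) \le x\}$ is sandwiched between $\{\wid(Q_j,u)\le x\}$ and $\{\wid(Q_j,u)\le x/(1-\alpha)\}$; summing over $j$ gives $\tilde M(x) \le M(x) \le \tilde M\big(x/(1-\alpha)\big)$. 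Combining with the $\eps$-sample bound from step one, $M(x)$ lies between $\wid^{\le}_{\mu_P}(x) - \eps$ and $\wid^{\le}_{\mu_P}\big(x/(1-\alpha)\big) + \eps$. Since $\wid^{\le}_{\mu_P}$ is isotonic and $x \le x/(1-\alpha)$, the value $M(x)$ is within $\eps$ of $\wid^{\le}_{\mu_P}(x')$ for some $x' \in [x, x/(1-\alpha)]$; and any such $x'$ satisfies $|x-x'| = x'-x \le x(\alpha/(1-\alpha))$, which is at most $\alpha' x$ for $\alpha'$ a small constant rescaling of $\alpha$ (or one simply runs the kernel construction with parameter $\alpha/(1+\alpha)$ to get exactly $\alpha$). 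This yields the two required inequalities.

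The main thing to be careful about — the only real obstacle — is the direction of the inequalities and the rescaling of $\alpha$: the $\alpha$-kernel only under-estimates the width, so the approximating point $x'$ is forced to lie above $x$, and one must check that the multiplicative slack $\alpha/(1-\alpha)$ is absorbed cleanly into the $(\eps,\alpha)$-quantization definition rather than left as a stray constant. Everything else (the union bound giving overall success probability $1-\delta$, the monotonicity arguments, and the fact that the sample size $m = O(\frac{1}{\eps^2}\log\frac{1}{\eps\delta})$ is exactly the one needed for the one-dimensional $\eps$-sample) is routine and parallels the earlier proofs.
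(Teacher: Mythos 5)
Your proof is correct and takes essentially the same route as the paper's: view the exact widths $\wid(Q_j,u)$ as i.i.d.\ samples giving an $\eps$-quantization with probability $1-\delta$, then use the fact that each $\alpha$-kernel underestimates the width by at most a relative $\alpha$ to sandwich the kernel-based empirical CDF between the exact empirical CDF at two nearby arguments. The only cosmetic difference is that you shift the query upward to $x/(1-\alpha)$ and then rescale $\alpha$, whereas the paper shifts downward to $\hat{w}=(1-\alpha)w$ so the slack is exactly $\alpha w$; both handle the constant equally well.
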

\begin{proof}
The width $\wid(Q_j, u)$ of a random point set $Q_j$ drawn from $\mu_P$ is a random sample from the distribution over widths of $\mu_P$ in direction $u$.  Thus, with probability $1-\delta$, $m$ such random samples would create an $\eps$-quantization.  Using the width of the $\alpha$-kernels $K_j$ instead of $Q_j$ induces an error on each random sample of at most $\alpha \cdot \wid(Q_j, u)$.  
Then for a query width $w$, say there are $\gamma m$ point sets $Q_j$ that have width $\leq w$ and $\gamma^\prime m$ $\alpha$-kernels $K_j$ with width $\leq w$.  Note that $\gamma^\prime>\gamma$.
Let $\hat{w} = w - \alpha w$. For each point set $Q_j$ that has width $> w$ but the corresponding $\alpha$-kernel $K_j$ has width $\leq w$, it follows that $K_j$ has width $> \hat{w}$.  
Thus the number of $\alpha$-kernels $K_j$ that have width $\leq \hat{w}$ is $\leq \gamma m$, and thus there is a width $w^\prime$ between $w$ and $\hat{w}$ such that the number of $\alpha$-kernels $\leq w^\prime$ is exactly $\gamma m$.  
\end{proof}

\begin{theorem}
With probability $1-\delta$, we can construct an $(\eps, \alpha)$-kernel for $\mu_P$ on $n$ points in $\b{R}^d$ of size $O(\frac{1}{\alpha^{(d-1)/2}} \frac{1}{\eps^2} \log \frac{1}{\eps \delta})$ and in time $O((n + \frac{1}{\alpha^{d-3/2}}) \frac{1}{\eps^2} \log \frac{1}{\eps \delta})$.  
\end{theorem}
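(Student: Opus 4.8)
The plan is to instantiate Algorithm~\ref{alg:rand-draw} with the reduction step being the computation of an $\alpha$-kernel, and then verify that the resulting object is an $(\eps,\alpha)$-kernel and that the claimed size and time bounds hold. First I would set $m = O(\frac{1}{\eps^2}\log\frac{1}{\eps\delta})$ and, for each $i \in \{1,\dots,m\}$, draw a random point set $Q_i$ from $\mu_P$ by sampling $q_j \sim \mu_{p_j}$ independently for each $p_j \in P$, then compute an $\alpha$-kernel $K_i$ of $Q_i$ using the algorithm of Chan~\cite{Cha06}. The data structure $\c K = \{K_1,\dots,K_m\}$ is the claimed $(\eps,\alpha)$-kernel. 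Correctness — i.e., that for any query direction $u \in \b S^{d-1}$ the multiset $M = \{\wid(K_j,u)\}_{j=1}^m$ is an $(\eps,\alpha)$-quantization of $\wid(\cdot,u)$ with probability $1-\delta$ — is exactly the preceding Lemma, so nothing new is needed there; I only need to observe that the single failure event (the $m$ widths $\{\wid(Q_j,u)\}$ failing to be an $\eps$-quantization) is what the Lemma already absorbs, and that the $\alpha$-kernel guarantee $\wid(Q_j,u)-\wid(K_j,u)\le\alpha\wid(Q_j,u)$ holds simultaneously for \emph{all} directions $u$ by definition of an $\alpha$-kernel, so a single draw of $\c K$ works for every query.

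For the size bound: each $K_j$ has $O(\frac{1}{\alpha^{(d-1)/2}})$ points by~\cite{Cha06}, and there are $m = O(\frac{1}{\eps^2}\log\frac{1}{\eps\delta})$ of them, giving total size $O(\frac{1}{\alpha^{(d-1)/2}}\cdot\frac{1}{\eps^2}\log\frac{1}{\eps\delta})$. For the running time: drawing $Q_j$ costs $O(n)$ (constant time per point by assumption), and computing $K_j$ from $Q_j$ costs $O(n + \frac{1}{\alpha^{d-3/2}})$ by~\cite{Cha06}; multiplying by $m$ gives $O((n+\frac{1}{\alpha^{d-3/2}})\cdot\frac{1}{\eps^2}\log\frac{1}{\eps\delta})$, as claimed.

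The steps are routine given the machinery already set up; there is no serious obstacle, only two points that deserve a careful sentence. The first is making explicit that a \emph{single} $\alpha$-kernel $K_j$ simultaneously handles all query directions $u$ — this is the whole point of the $\alpha$-kernel definition and is why we can build one data structure rather than one per direction — so the probability $1-\delta$ need not be union-bounded over directions; it only covers the $\eps$-quantization event on the $m$ width-samples, which is handled by the Lemma. The second is that the additive geometric error composes correctly with the probabilistic sampling error to yield the two-part $(\eps,\alpha)$-quantization guarantee (condition (1) $|x-x'|\le\alpha x$ from the kernel, condition (2) $|M(x)-f^{\le}_{\mu_P}(x')|\le\eps$ from the sample), which is precisely the content of the Lemma invoked as a black box. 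With those observations in place the theorem follows by plugging the bounds of~\cite{Cha06} into the framework of Algorithm~\ref{alg:rand-draw}.
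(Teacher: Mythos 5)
Your proposal matches the paper's (implicit) argument exactly: the paper offers no separate proof of this theorem, deriving it from the construction of $\c{K}$ as $m = O(\frac{1}{\eps^2}\log\frac{1}{\eps\delta})$ $\alpha$-kernels of sampled point sets, the preceding lemma for correctness, and the size/time bounds of \cite{Cha06}, which is precisely what you do. One small caution: the lemma's probabilistic guarantee is for a fixed query direction $u$ (the event that the $m$ sampled widths form an $\eps$-quantization depends on $u$), so your aside that a single draw needs no union bound over directions is only right for the deterministic $\alpha$-kernel part, not the sampling part — though the paper is equally loose on this point.
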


\paragraph{$k$-Dependent $(\eps, \alpha)$-Kernels.}
The definition of $(\eps, \alpha)$-quantizations can be extended to $k$-variate $(\eps, \alpha)$-quantizations $M$ where 
(1) there exists a point $x^\prime \in \b{R}^k$ such that for all integers $i \in [1,k]$ $|x^{(i)} - (x^\prime)^{(i)}| \leq \alpha x^{(i)}$ and
(2) $|M(x) - f_{\mu_P}^{\preceq}(x^\prime)| \leq \eps$.
Let $x^{(i)}$ represent the $i$th coordinate of a point $x \in \b{R}^k$.

$(\eps,\alpha)$-kernels can be generalized to approximate other functions $f : \b{R}^{dn} \to \b{R}^k$, specified as follows.
We say a point $p^\prime \in \b{R}^k$ is a \emph{relative $\theta$-approximation} of $p \in \b{R}^k$ if for each coordinate $i$ we have $p^{(i)} - {p^\prime}^{(i)} \leq \theta p^{(i)}$.  For functions $f$ and $\theta$ where $f(K)$ is a relative $\theta(\alpha)$-approximation of $f(Q)$ when $K$ is an $\alpha$-kernel of $Q$, we say that $f$ is \emph{relative $\theta(\alpha)$-approximable}.  

By setting $m = O(\frac{k}{\eps^2} \log \frac{k}{\eps \delta})$ in the above algorithm, with probability $1-\delta$, we can build a \emph{$k$-dependent $(\eps, \alpha)$-kernel} data structure $\c{K}$ with the following properties.  
It has size $O(\frac{1}{\alpha^{(d-1)/2}} \frac{k}{\eps^2} \log \frac{k}{\eps \delta})$ and can be built in time $O((n + \frac{1}{\alpha^{d-3/2}})  \frac{k}{\eps^2} \log \frac{k}{\eps \delta})$.  
To create a $k$-variate $(\eps, \alpha)$-quantization for a function $f$, create a $k$-dimensional point $p_j = f(K_j)$ for each $\alpha$-kernel $K_j$ in $\c{K}$.  The set $M$ of $m$ $k$-dimensional points forms the $k$-variate $(\eps, \alpha)$-quantization.

\begin{theorem}
Let $f$ be a relative $\theta(\alpha)$-approximable function that takes $T_f(n)$ time to evaluate on a set of $n$ points in $\b{R}^d$.  
From a $k$-dependent $(\eps,\alpha)$-kernel $\c{K}$ with $m$ $\alpha$-kernels, with probability $1-\delta$, we can create a $k$-variate $(\eps, \theta(\alpha))$-quantization of $f$, of size $O(\frac{1}{\eps} \log^{2k} \frac{1}{\eps})$ in time $O(T_f(\frac{1}{\alpha^{(d-1)/2}}) m)$.
\end{theorem}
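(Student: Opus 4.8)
The plan is to combine the $k$-dependent $(\eps,\alpha)$-kernel construction with the reduction machinery for $k$-variate $\eps$-quantizations used in Theorem~\ref{thm:k-var-q}. The $k$-dependent $(\eps,\alpha)$-kernel $\c{K}$ already consists of $m = O(\frac{k}{\eps^2}\log\frac{k}{\eps\delta})$ individual $\alpha$-kernels $K_1,\ldots,K_m$, each approximating a point set $Q_j$ drawn independently from $\mu_P$. First I would apply $f$ to each $K_j$: since evaluating $f$ on a set of $O(1/\alpha^{(d-1)/2})$ points costs $T_f(1/\alpha^{(d-1)/2})$ time, producing all $m$ points $p_j = f(K_j) \in \b{R}^k$ takes $O(T_f(\frac{1}{\alpha^{(d-1)/2}})\, m)$ time, which matches the claimed running time.

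Next I would argue correctness of the approximation guarantee in two stages, mirroring the proof of the earlier $(\eps,\alpha)$-quantization lemma. Stage one: the points $f(Q_j)$, had we used the true random sets $Q_j$, are i.i.d.\ samples from the $k$-variate distribution $g$ with $f^{\preceq}_{\mu_P}(v) = \int_{x \preceq v} g(x)\,dx$, so by the $\eps$-sample bound for the range space $(g, \c{R}_+)$ of VC-dimension $k$, with probability $1-\delta$ the multiset $\{f(Q_j)\}$ is an $(\eps/2)$-quantization of $f^{\preceq}_{\mu_P}$; this is exactly where $m = O(\frac{k}{\eps^2}\log\frac{k}{\eps\delta})$ is needed. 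Stage two: replacing $Q_j$ by its $\alpha$-kernel $K_j$ perturbs each coordinate of the sample point from $f(Q_j)$ to $f(K_j)$ by at most a relative $\theta(\alpha)$ factor, since $f$ is relative $\theta(\alpha)$-approximable. I would then run the same coordinatewise ``slack-interval'' argument as in the width lemma, applied independently in each of the $k$ coordinates: for a query $x$, set $x'$ with $(x')^{(i)}$ chosen between $x^{(i)}$ and $x^{(i)}(1-\theta(\alpha))$ so that the count of points $p_j \preceq x'$ equals the count of true points $f(Q_j) \preceq x$. Monotonicity of $f^{\preceq}_{\mu_P}$ and the fact that $\alpha$-kernel error only shrinks extent in each coordinate guarantee such an $x'$ exists, giving property (1) $|x^{(i)} - (x')^{(i)}| \leq \theta(\alpha)\, x^{(i)}$ and property (2) $|M(x) - f^{\preceq}_{\mu_P}(x')| \leq \eps$.

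Finally I would reduce the size. The raw set $M$ has $m = O(\frac{k}{\eps^2}\log\frac{k}{\eps\delta})$ points, but as in Theorem~\ref{thm:k-var-q} we can thin it to an $(\eps/2)$-sample of the range space $(M, \c{R}_+)$ of VC-dimension $k$, obtaining size $O(\frac{k}{\eps}\log^{2k}\frac{1}{\eps})$ by the deterministic $\eps$-sample construction of~\cite{Phi08}; absorbing the $k$ factor as stated yields the claimed $O(\frac{1}{\eps}\log^{2k}\frac{1}{\eps})$ bound. I would note this thinning costs low-order time (polynomial in $\frac{1}{\eps}$ and $\log^k\frac1\eps$) and is dominated by the $O(T_f(\frac{1}{\alpha^{(d-1)/2}})\, m)$ term.

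The main obstacle I anticipate is the bookkeeping in stage two: the width lemma handles a single scalar width, whereas here the slack-interval trick must be run simultaneously in $k$ coordinates while preserving the joint isotonic structure, so I need to be careful that a single witness $x'$ works for property (2) rather than $k$ separate witnesses. The key observation making this go through is that $\alpha$-kernel error is \emph{one-sided} in the right way for the measures under consideration (it underestimates extent, hence $f(K_j) \preceq f(Q_j)$ componentwise in the relevant cases), so all coordinate perturbations push the sample points in a consistent direction and a coordinatewise choice of $x'$ suffices.
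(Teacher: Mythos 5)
Your proposal matches the paper's own proof in all essentials: evaluate $f$ on each $\alpha$-kernel $K_j$ (giving the $O(T_f(\frac{1}{\alpha^{(d-1)/2}})\,m)$ time), observe that the values $f(Q_j)$ on the true samples form an $\eps$-quantization of $f^{\preceq}_{\mu_P}$ with probability $1-\delta$, and then handle the kernel substitution by exactly the same one-sided, coordinatewise slack argument between $\hat{x} = x - \theta(\alpha)x$ and $x$ (relying on $f(K_j) \preceq f(Q_j)$ so the counts only move one way) to exhibit the witness $x'$. Your explicit final thinning to $O(\frac{1}{\eps}\log^{2k}\frac{1}{\eps})$ points via the deterministic construction is the step the paper's proof leaves implicit but its size claim requires, so if anything your write-up is slightly more complete; the residual looseness (ties when sliding $x'$, and whether the guarantee is stated at $x$ or at $x'$) is shared with the paper.
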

\begin{proof}
Each evaluation of $f$ on a point set $Q_j$ drawn from $\mu_P$ is a random sample from the distribution over $f$ on point sets drawn from $\mu_P$ and hence these values on all $m$ sampled point sets would be an $\eps$-quantization of $f^{\preceq}_{\mu_P}$.  

For a query point $w \in \b{R}^k$, let $\gamma m$ point sets produce a value $w_j = f(Q_j)$ such that $w_j \preceq w$, and let $\gamma^\prime m$ point sets produce a value $w_j^\prime = f(K_j)$ such that $w_j^\prime \preceq w$.  Note that $\gamma^\prime > \gamma$.  
Because $f$ is relative $\theta(\alpha)$-approximable, for each point set $Q_j$ such that $w_j^\prime \preceq w$, but $w_j \npreceq w$, then $w_j^\prime \npreceq \hat{w}$, where $\hat{w} = w - \theta(\alpha) w$.  (More specifically, for each coordinate $w^{(i)}$ of $w$, $\hat{w}^{(i)} = w^{(i)} - \theta(\alpha) w^{(i)}$.)  
Thus, the number of point sets such that $f(K_j) \preceq \hat{w}$ is $\leq \gamma m$, and hence there is a point $w^\prime$ between $w$ and $\hat{w}$ such that the fraction of sampled point sets such that $f(K_j) \preceq w^\prime$ is exactly $\gamma$, and hence is within $\eps$ of the true fraction of point sets sampled from $\mu_P$ with probability $1-\delta$.
\end{proof}

To name a new examples, the width and diameter are relative $\alpha$-approximable functions, thus the results apply directly with $k=1$.  The radius of the minimum enclosing ball is relative $2\alpha$-approximable with $k=1$.
The $d$ directional widths of the minimum perimeter or minimum volume axis-aligned rectangle is relative $\alpha$-approximable with $k=d$.

\subsection{Experiments with $(\eps, \alpha)$-Kernels and $\eps$-Quantizations}

We implemented these randomized algorithms for $(\eps, \alpha)$-kernels and $\eps$-quantizations for diameter (\sf{diam}), width in a fixed direction (\sf{dwid}), and radius of the smallest enclosing $\ell_2$ ball (\sf{seb}$_2$).  We used existing code from Hai Yu~\cite{YAPV04} for $\alpha$-kernels and Bernd G\"{a}rtner~\cite{Gar99} for $\sf{seb}_2$.  For the input set $\mu_P$ we generated $5000$ points $P \subset \b{R}^3$ on the surface of a cylinder piece with radius $1$ and axis length $10$.  Each point $p \in P$ represented the center of a Gaussian with standard deviation $3$.  We set $\eps = .2$ and generated $\alpha$-kernels of size at most $40$ (the existing code did not allow the use to specify a parameter $\alpha$, only the maximum size).  
We generated a total of $m=40$ point sets from $\mu_P$.
The $(\eps, \alpha)$-kernel has a total of $1338$ points.
We calculated $\eps$-quantizations and $(\eps, \alpha)$-quantizations for \sf{diam}, \sf{dwid}, and \sf{seb}$_2$, each of size $10$; see Figure \ref{fig:exp}.

\drieplaatjes {exp-seb+} {exp-width+} {exp-diam+}
  {\label{fig:exp} $(\eps,\alpha)$-quantization (white circles) and $\eps$-quantization (black circles) for (a) \sf{seb}$_2$, (b) \sf{dwid}, and (c) \sf{diam}.}







\section{Shape Inclusion Probabilities}
We can also use a variation of Algorithm \ref{alg:rand-draw} to construct $\eps$-shape inclusion probability functions.  For a point set $Q \subset \b{R}^d$, let the summarizing shape $S_Q = \c{S}(Q)$ be from some geometric family $\c{S}$ so $(\b{R}^d, \c{S})$ has bounded VC-dimension $\nu$.  
We randomly sample point sets $Q_j$ from $\mu_P$ and then find the summarizing shape $S_{Q_j}$ (e.g. minimum enclosing ball) of $Q_j$.  Let this set of shapes be $S^{(\mu_P)}$.  If there are multiple shapes from $\c{S}$ which are equally optimal (as can happen degenerately with, for example, minimum width slabs), choose one of these shapes at random.  
For a set of shapes $S^\prime \subset \c{S}$, let $S^\prime_p \subset S^\prime$ be the subset of shapes that contain $p \in \b{R}^d$.  
We store $S^{(\mu_P)}$ and evaluate a query point $p \in \b{R}^d$ by counting what fraction of the shapes the point is contained in, specifically returning $|S^{(\mu_P)}_p| / |S^{(\mu_P)}|$ in $O(\nu |S^{(\mu_P)}|)$ time.  In some cases, this evaluation can be sped up with point location data structures.

\begin{theorem}
For a distribution $\mu_P$ of $n$ points and a family of summarizing shapes $(\b{R}^d, \c{S})$ with bounded VC-dimension $\nu$, with probability $1-\delta$
we can construct an $\eps$-\sip function of size $O(2^{\nu+1} \frac{\nu^2}{\eps^2} \log \frac{1}{\eps \delta})$ 
and in time $O(T_{\c{S}}(n)\frac{1}{\eps^2} \log \frac{1}{\eps \delta})$, where $T_{\c{S}}(n)$ is the time it takes to determine the summarizing shape of any point set $Q \subset \b{R}^d$ of size $n$.
\label{thm:rand-sip}
\end{theorem}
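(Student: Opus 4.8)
The plan is to view the sip function $h_{\mu_P}$ as the measure of a single range in a range space whose ground set is the family $\c{S}$ of summarizing shapes, and to estimate that measure by an empirical fraction over randomly sampled shapes, so that the uniform--convergence property of $\eps$-samples makes $|\hat h(x)-h_{\mu_P}(x)|\le\eps$ hold simultaneously for every query point $x\in\b{R}^d$. Drawing $Q\sim\mu_P$ and taking its summarizing shape $S_Q$ (breaking ties uniformly at random among optimal shapes, as in the footnote) induces a probability distribution $\psi$ on $\c{S}$, and by definition $h_{\mu_P}(x)=\P_{S\sim\psi}[x\in S]=\psi(\c{S}_x)$, where $\c{S}_x=\{S\in\c{S}:x\in S\}$. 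The data structure returns $\hat h(x)=|S^{(\mu_P)}_x|/|S^{(\mu_P)}|$, the fraction of sampled shapes lying in the range $\c{S}_x$. Hence it suffices to show that, with probability $1-\delta$, the sampled multiset $S^{(\mu_P)}=\{S_{Q_1},\ldots,S_{Q_m}\}$ is an $\eps$-sample of the weighted range space $(\psi,\c{S}^{*})$ with $\c{S}^{*}=\{\c{S}_x:x\in\b{R}^d\}$; the definition of $\eps$-sample then gives $\big||S^{(\mu_P)}\cap\c{S}_x|/|S^{(\mu_P)}|-\psi(\c{S}_x)\big|\le\eps$ for all $x$, which is exactly the $\eps$-sip guarantee.

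The step I expect to be the main obstacle is bounding the VC-dimension of the \emph{dual} range space $(\c{S},\c{S}^{*})$. Since $(\b{R}^d,\c{S})$ has VC-dimension $\nu$, the standard primal--dual relationship for VC-dimension shows that $(\c{S},\c{S}^{*})$ has VC-dimension $\nu^{*}<2^{\nu+1}$: if $\c{S}^{*}$ shatters a set of $D$ shapes $S_1,\ldots,S_D$, then choosing, for each subset $A$ of a fixed $\lfloor\log_2 D\rfloor$-element index set, a query point whose membership pattern across $S_1,\ldots,S_D$ encodes $A$, exhibits $\lfloor\log_2 D\rfloor$ points shattered by $\c{S}$ in the primal, so $\lfloor\log_2\nu^{*}\rfloor\le\nu$. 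The tie-breaking convention and any measurability concerns are harmless here: they only pin down the definition of $\psi$, and the samples $S_{Q_1},\ldots,S_{Q_m}$ are genuine i.i.d.\ draws from $\psi$, so the randomized $\eps$-sample construction of Section~\ref{sec:prelim} applies to $(\psi,\c{S}^{*})$ directly.

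I would then invoke that construction: with $\nu^{*}=O(2^{\nu})$, a set of $m=O\!\big(\frac{\nu^{*}}{\eps^2}\log\frac{\nu^{*}}{\eps\delta}\big)=O\!\big(\frac{2^{\nu+1}}{\eps^2}\log\frac{2^{\nu+1}}{\eps\delta}\big)$ i.i.d.\ samples from $\psi$ is an $\eps$-sample of $(\psi,\c{S}^{*})$ with probability $1-\delta$, and hence $\hat h$ is an $\eps$-sip with probability $1-\delta$. For the size, each stored shape takes $O(\nu)$ space (as the $O(\nu|S^{(\mu_P)}|)$ query time already presumes), so the data structure has size $O(\nu m)=O\!\big(2^{\nu+1}\frac{\nu^2}{\eps^2}\log\frac{1}{\eps\delta}\big)$, absorbing $\log 2^{\nu+1}=O(\nu)$ into the $\log\frac{1}{\eps\delta}$ factor. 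For the running time, treating $\nu$ as a constant, each of the $O(\frac{1}{\eps^2}\log\frac{1}{\eps\delta})$ iterations draws $n$ points from $\mu_P$ and computes a single summarizing shape in $T_{\c{S}}(n)$ time, for a total of $O\!\big(T_{\c{S}}(n)\frac{1}{\eps^2}\log\frac{1}{\eps\delta}\big)$, matching the claimed bounds.
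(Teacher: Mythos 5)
Your proposal is correct and follows essentially the same route as the paper: both treat each sampled summarizing shape as an i.i.d.\ draw from the distribution on $\c{S}$ induced by $\mu_P$, apply the randomized $\eps$-sample bound to the dual range space (whose VC-dimension is at most $2^{\nu+1}$), and account for size and time in the same way. The only difference is that you sketch the primal--dual VC-dimension bound explicitly, which the paper simply cites as a known fact.
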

\begin{proof}
If $(\b{R}^d, \c{S})$ has VC-dimension $\nu$, then the dual range space $(\c{S}, P^*)$ has VC-dimension $\nu^\prime \leq 2^{\nu+1}$, where $P^*$ is all subsets $\c{S}_p \subseteq \c{S}$, for any $p \in \b{R}^d$, such that $\c{S}_p = \{S \in \c{S} \mid p \in S\}$.  
Using the above algorithm, sample $m = O(\frac{\nu^\prime}{\eps^2} \log \frac{\nu^\prime}{\eps \delta})$ point sets $Q$ from $\mu_P$ and generate the $m$ summarizing shapes $S_Q$.  
Each shape is a random sample from $\c{S}$ according to $\mu_P$, and thus $S^{(\mu_P)}$ is an $\eps$-sample of $(\c{S}, P^*)$.

Let $w_{\mu_P}(S)$, for $S \in \c{S}$, be the probability that $S$ is the summarizing shape of a point set $Q$ drawn randomly from $\mu_P$.  
Let $W_{\mu_P}(\c{S}^\prime) = \int_{S \in \c{S}^\prime} w_{\mu_P}(S)$, where $\c{S}^\prime \subseteq P^*$, be the probability that some shape from the subset $\c{S}^\prime$ is the summarizing shape of $Q$ drawn from $\mu_P$.  

We approximate the \sip function at $p \in \b{R}^d$ by returning the fraction $|S^{(\mu_P)}_p| / m$.  
The true answer to the \sip function at $p \in \b{R}^d$ is $W_{\mu_P}(\c{S}_p)$.  
Since $S^{(\mu_P)}$ is an $\eps$-sample of $(\c{S}, P^*)$, then with probability $1-\delta$
$$
\left| \frac{|S^{(\mu_P)}_p|}{m} - \frac{W_{\mu_P}(\c{S}_p)}{1} \right|
= 
\left| \frac{|S^{(\mu_P)}_p|}{|S^{(\mu_P)}|} - \frac{W_{\mu_P}(\c{S}_p)}{W_{\mu_P}(P^*)} \right|
\leq \eps.
$$

Since for the family of summarizing shapes $\c{S}$ the range space $(\b{R}^d, \c{S})$ has VC-dimension $\nu$, each can be stored using that much space.  
\end{proof}

The size can then be reduced to $O(2^{\nu +1} \frac{\nu^2}{\eps^2} \log \frac{1}{\eps})$ in time $O((2^{\nu+1})^{3 \cdot 2^{\nu+1}+1} (\frac{\nu}{\eps} \log \frac{1}{\eps})^{2^{\nu+1}+1})$ using deterministic techniques.

\paragraph{Representing $\eps$-\sip functions by Isolines.}
  \vierplaatjes {sip-uniform-seb-5000-half} {sip-normal-seb-5000-half} {sip-uniform-aabb-25000-half} {sip-normal-aabb-25000-half}
  {\label{fig:sip-seb} (a) The shape inclusion probability for the smallest enclosing ball, for points uniformly distributed inside the circles. 
  (b) The same, but for normally distributed points around the circle centers, with standard deviations given by the radii.  
  (c) The shape inclusion probability for the smallest enclosing axis-aligned rectangle, for points uniformly distributed inside the circles. 
  (d) The same, but for normally distributed points.}

Shape inclusion probability functions are density functions.  One convenient way of visually representing a density function in $\b{R}^2$ is by drawing the isolines.  A \emph{$\gamma$-isoline} is a closed curve such that on the inside the density function is $> \gamma$ and on the outside is $< \gamma$.  

In each part of Figure \ref{fig:sip-seb} a set of 5 circles correspond to points with a probability distribution.   For part (a) and (c), the probability distribution is uniform over those circles, in part (b) and (d) it is drawn from a multivariate Gaussian distribution with standard deviation as the radius.  We generate $\eps$-\sip functions for smallest enclosing ball in Figure \ref{fig:sip-seb}(a,b) and for smallest axis-aligned bounding box in Figure \ref{fig:sip-seb}(c,d).  

In all figures we draw approximations of $\{.9,.7,.5,.3,.1\}$-isolines. 
These drawing are generated by randomly selecting $m = 5000$ (a,b) or $m=25000$ (c,d) shapes, counting the number of inclusions at different points in the plane and interpolating to get the isolines.
  The innermost and darkest region has probability $> 90\%$, the next one probability $> 70\%$, etc., the outermost region has probability $< 10 \%$.

When $\mu_P$ describes the distribution for $n$ points and $n$ is large, then isolines are generally connected for convex summarizing shapes.  In fact, in $O(n)$ time we can create a point which is contained in the convex hull of a point set sampled from $\mu_P$ with high probability.  Specifics are discussed in Appendix \ref{app:center-point}.

\section{Deterministic Constructions of $\eps$-Quantizations}
\label{sec:deterministic}

In this section we consider functions $f$ which describe the size of some summarizing shape from the family $\c{A}$ such that $(\b{R}^d, \c{A})$ has constant VC-dimension.
In particular, given a point set $Q \subset \b{R}^d$, let $\c{A}(Q) \subset \b{R}^d$ (e.g. smallest enclosing ball) be the summarizing shape for $Q$, and let $f(Q)$ be a statistic of $\c{A}(Q)$ (e.g. radius of the smallest enclosing ball).
The overall strategy will be to deterministically approximate each $\mu_{p_i}$ with a point set $Q_{p_i}$, although not with respect to the range space $(\mu_{p_i}, \c{A})$, but with a more complicated range space described below.  Let $Q_P = \{Q_{p_i}\}_i$ describe this set of point sets.  Then let the function 
$f(Q_P, r)$ describe the fraction of point sets $Q^\prime = (q_1 \in Q_{p_1}, q_2 \in Q_{p_2}, \ldots, q_n \in Q_{p_n})$ for $\{Q_{p_1}, \ldots Q_{p_n}\} = Q_P$ such that $f(Q^\prime) \leq r$.  
We show that we can generate a set of point sets $Q_P$ such that $f(Q_P, r)$ is a good approximation of $f^{\leq}_{\mu_P}(r)$.  
And we show how to efficiently evaluate $f(Q_P,r)$.

\subsection{Approximating $\mu_p$}
In this section we restrict that $\mu_P$ is either defined by a polygonal surface $S$ with $b$ facets or is polygonal approximable, it can be approximated by a finite polygonal surface $S$ with $b$ facets, for some constant $b$, as in \cite{Phi08}.  

It might seem that we can just create an $\eps$-sample of $(\mu_{p_i}, \c{A})$ for each $\mu_{p_i}$, but we need to consider a more complicated family $\c{A}_{f,n}$.  Given a family of shapes $\c{A}$ and a function $f$ which computes a value determined by a summarizing shape $A \in \c{A}$ for a set of $n$ points, then $\c{A}_{f,n}$ is a family of shapes where each is defined by a set of $n-1$ points $T \subset \b{R}^d$ and a value $w$.  Specifically, $\c{A}_{f,n}(T,w)$ is the set of points $\{p \in \b{R}^d \mid f(T \cup p) \leq w \}$.  

In certain cases, such as the volume of the axis-aligned bounding box, $(\mu_{p_i}, \c{A}_{f,n})$ has constant VC-dimension.  Shapes from $\c{A}_{f,n}$ are determined by the placement of $2d$ points, the most extreme in each axis direction, thus its shatter dimension is $\sigma_f = 2d$.  Hence an $\eps$-sample for $(\mu_{p_i}, \c{A}_{f,n})$ of size $O(\frac{1}{\eps^2} \log \frac{1}{\eps})$ can be calculated in time $O(\frac{1}{\eps^2} \log^2 \frac{1}{\eps})$ for each $\mu_{p_i}$.  

\parpic[r]{\PPic{2cm}{figures/shape-sebl2-wedges-half}}

In other cases, such as the radius of smallest enclosing $\ell_2$ disks, $\c{A}_{f,n}$ defines regions which have $O(n)$ $(d-1)$-dimensional faces on its boundary and thus $(\mu_{p_i}, \c{A}_{f,n})$ has VC-dimension $n$.  
Naive techniques would take time exponential in $n$ to deterministically create an $\eps$-sample, but we can do better by decomposing a shape $A \in \c{A}_{f,n}$ into $O(n)$ disjoint simpler shapes.
In the case of disks, $\c{A}_{f,n}$ has its boundary defined by at most $2n$ circular arcs of two different radii.  We can choose a point in the convex hull of $T$ and draw lines to each intersection of circular arcs, see the figure on right.  The intersections of the disc defining each boundary piece and the halfspaces for the drawn lines at its endpoints describes a \emph{wedge} from a family $\c{W}_{f,n}$.  The range space $(\b{R}^d, \c{W}_{f,n})$ has VC-dimension at most $9$ because shapes from $\c{W}_{f,n}$ are formed by the intersection of three shapes from families that would each have VC-dimenion $3$ in a range space on the same ground set.  
Thus, an $\frac{\eps}{2n}$-sample of $(\mu_{p_i}, \c{W}_{f,n})$ is an $\eps$-sample of $(\mu_{p_i}, \c{A}_{f,n})$.  
So for radius of the smallest enclosing $\ell_2$ balls we can create an $\eps$-sample of $(\mu_{p_i}, \c{A}_{f,n})$ of size $O(n^2 \frac{1}{\eps^2} \log \frac{n}{\eps})$ in time $O(n^2 \frac{1}{\eps^2} \log^2 \frac{n}{\eps})$ for each $\mu_{p_i}$.

We generalize both of these cases to other shapes and in higher dimensions in Appendix \ref{app:shapes}.  There are also illustrations of various shapes from $\c{A}_{f,n}$.

\begin{lemma}
When each $\mu_{p_i}$ is approximated with an $\eps^\prime$-sample $Q_{p_i}$ of $(\mu_{p_i},\c{A}_{f,n})$, then for any $r$
$$
\left| \P[f_{\mu_P}(P) \leq r]  -
f(\{Q_{p_1}, Q_{p_2}, \ldots, Q_{p_n}\}, r)  \right| \leq  \eps^\prime n.
$$
\label{thm:n-apx}
\end{lemma}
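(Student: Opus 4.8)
The plan is to decompose the total error into $n$ single-point errors, each of which is controlled by the defining property of an $\eps'$-sample of $(\mu_{p_i}, \c{A}_{f,n})$, and then combine them by a hybrid (telescoping) argument. Fix a value $r$. For an index $j \in \{0, 1, \ldots, n\}$, define the hybrid quantity $H_j$ to be the probability that $f(Q') \leq r$ when the first $j$ coordinates are drawn from the true distributions $\mu_{p_1}, \ldots, \mu_{p_j}$ and the remaining $n-j$ coordinates are drawn (uniformly by measure) from the samples $Q_{p_{j+1}}, \ldots, Q_{p_n}$; more precisely $H_j = \int 1(f(q_1, \ldots, q_n) \leq r)$ over $q_1 \sim \mu_{p_1}, \ldots, q_j \sim \mu_{p_j}$ and $q_{j+1} \in Q_{p_{j+1}}, \ldots, q_n \in Q_{p_n}$, the latter weighted uniformly. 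Then $H_n = \P[f_{\mu_P}(P) \leq r]$ and $H_0 = f(\{Q_{p_1}, \ldots, Q_{p_n}\}, r)$, so by the triangle inequality it suffices to show $|H_j - H_{j-1}| \leq \eps'$ for every $j$.

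The key step is bounding a single term $|H_j - H_{j-1}|$. The only difference between $H_j$ and $H_{j-1}$ is that in coordinate $j$ we replace the true distribution $\mu_{p_j}$ by its sample $Q_{p_j}$, while all other coordinates are fixed to the same (integrated or summed) values. Condition on a choice of the other $n-1$ points $T = (q_1, \ldots, q_{j-1}, q_{j+1}, \ldots, q_n)$. For this fixed $T$ and fixed $r$, the set of $q \in \b{R}^d$ making $f(T \cup q) \leq r$ is exactly a range $\c{A}_{f,n}(T, r) \in \c{A}_{f,n}$, by the definition of the family $\c{A}_{f,n}$. Hence the inner difference is $\big| \mu_{p_j}(\c{A}_{f,n}(T,r)) - \tfrac{|Q_{p_j} \cap \c{A}_{f,n}(T,r)|}{|Q_{p_j}|} \big|$, which is at most $\eps'$ because $Q_{p_j}$ is an $\eps'$-sample of $(\mu_{p_j}, \c{A}_{f,n})$ (recall $\mu_{p_j}$ is a probability distribution, so $\phi(\mu_{p_j}) = 1$). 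Integrating this uniform bound over the distribution of $T$ (a convex combination of numbers each in $[-\eps', \eps']$) gives $|H_j - H_{j-1}| \leq \eps'$. Summing over $j = 1, \ldots, n$ yields the claimed bound $\eps' n$.

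The main obstacle — and the reason the paper introduces the auxiliary family $\c{A}_{f,n}$ rather than working with $(\mu_{p_i}, \c{A})$ directly — is ensuring that the conditional sublevel set $\{q : f(T \cup q) \leq r\}$ really is a single range in a range space of bounded (or at least controlled) VC-dimension, so that the $\eps'$-sample guarantee applies to it. This is precisely what the definition $\c{A}_{f,n}(T,w) = \{p \in \b{R}^d : f(T \cup p) \leq w\}$ is engineered to provide; once one accepts that $Q_{p_j}$ was constructed as an $\eps'$-sample against this richer family (as set up in the preceding subsection, including the decomposition into wedges for the $\ell_2$-ball case), the rest is a routine hybrid argument. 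A minor point to handle carefully is measurability of the hybrid integrals and the fact that the ordering of which coordinates are "true" versus "sampled" is arbitrary — any fixed ordering works, since each swap is bounded by $\eps'$ independently of the others.
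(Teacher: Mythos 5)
Your proof is correct and matches the paper's own argument: the paper's proof is exactly this hybrid/telescoping replacement, swapping one integral over $\mu_{p_i}$ for an average over $Q_{p_i}$ at a time, using that for fixed remaining points the sublevel set $\{q : f(T \cup q) \leq r\}$ is a range $\c{A}_{f,n}(T,r)$ so each swap costs at most $\eps^\prime$, and summing over the $n$ swaps (the paper states the upper bound and notes the symmetric lower bound, which your two-sided formulation handles via the triangle inequality).
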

\begin{proof}
When $P$ is drawn from a distribution $\mu_P$, then we can write that probability that $f_{\mu_P}(P) \leq r$ as follows.

\begin{eqnarray*}
\P[f_{\mu_P}(P) \leq r] = \int_{q_1} \mu_{p_1}(q_1) \int_{q_2} \mu_{p_2}(q_2) \ldots \int_{q_n} \mu_{p_n}(q_n) 
1(f(\{q_1, q_2, \ldots, q_n\}) \leq r) \;
d q_n d q_{n-1} \ldots d {q_1}
\end{eqnarray*}

Consider the inner most integral
$$
\int_{q_n} \mu_{p_n}(q_n)  1(f(\{q_1, q_2, \ldots, q_n\}) \leq r) \; d q_n,
$$
where $\{q_1, q_2 \ldots, q_{n-1}\}$ are fixed.  The indicator function is true when for $q_n$ $f(\{q_1, q_2, \ldots, q_{n-1}, q_n\}) \leq r$ and hence $q_n$ is contained in a shape from $\c{A}_{f,n}(\{q_1, q_2, \ldots q_{n-1}\}, r)$.  
Thus if we have an $\eps^\prime$-sample $Q_{p_n}$ for $(\mu_{p_n},\c{A}_{f,n})$, then we can guarantee that 
$$
\int_{q_n}  \mu_{p_n}(q_n)  1(f(\{q_1, q_2, \ldots, q_n\}) \leq r) \; d q_n \leq \frac{1}{|Q_{p_n}|} \sum_{q_n \in Q_{p_n}} 1(f(\{q_1, q_2, \ldots, q_{n-1}, q_n\}) \leq r) + \eps^\prime.
$$
We can then move the $\eps^\prime$ to the outside, and we can change the order of the integrals to write:
\begin{eqnarray*}
\lefteqn{\P[f_{\mu_P}(P) \leq r] \; \leq }
\\ & & \frac{1}{|Q_{p_n}|} \sum_{q_n \in Q_{p_n}} \int_{q_1} \mu_{p_1}(q_1) \int_{q_2} \mu_{p_2}(q_2) \ldots \int_{q_{n-1}} \mu_{p_{n-1}}(q_{n-1}) 
1(f(\{q_1, q_2, \ldots, q_n\}) \leq r) \;
d q_{n-1} d q_{n-2} \ldots d {q_1} + \eps^\prime.
\end{eqnarray*}
Repeating this procedure $n$ times we get:
\begin{eqnarray*}
\P[f_{\mu_P}(P) \leq r] 
&\leq &
\left(\prod_{i=1}^n \frac{1}{|Q_{p_i}|}\right) 
\sum_{i=1}^n \sum_{q_i \in Q_{p_i}} 
1(f(\{q_1, q_2, \ldots, q_n\}) \leq r)
+ \eps^\prime n.
\\ & = &
f(Q_P,r) + \eps^\prime n.
\end{eqnarray*}

Using the same technique we can achieve a symmetric lower bound for $\P[f_{\mu_P}(P) \leq r]$.
\end{proof}

By setting $\eps^\prime = \eps/n$ we can achieve an additive $\eps$-approximation by using an $\eps^\prime$-sample for each $(\mu_{p_i}, \c{A}_{f,n})$.

\subsection{Evaluating $f(Q_P,r)$.}
Evaluating $f(Q_P, r)$ in time polynomial in $n$ and $|Q_{p_i}|$, for any $i$, is not completely trivial since there are $n^{|Q_{p_i}|}$ possible sets in $Q_P$.  
Let a \emph{good set} be a set of $n$ points, $G$, such that for each $Q_i$ there exists a point $g_i \in G$ such that $g_i \in Q_i$.  For each good set $G$ there exists a unique basis of at most $\sigma_f$ points\footnote{This uniqueness requires careful construction of the $\eps$-samples $Q_i$, as described in Appendix \ref{sec:eps-dist}.} which define the summarizing shape of $G$ (remember the shatter dimension of $\c{A}$ is $\sigma_f$ and $\sigma_f < \nu_f$, the VC-dimension).
Define a \emph{valid basis} to be a set of at most $\sigma_f$ points in $Q_P$ such that each point is from a different $Q_i$ and if any point is removed the summarizing shape changes.
Each valid basis forms a basis for several good sets.  

We now construct $R$, an $\eps$-quantization of $f^{\leq}_{\mu_P}$.
This approximation is created by calculating the summarizing shape for all good sets.  Even though there are an exponential number of good sets, there are only a polynomial number of valid bases.  Thus for each valid basis, we count the number of good sets it represents.  And we let each valid basis contribute to the $\eps$-quantization; its position is determined by its value in $f$ and its weight by the number of good sets it represents.  We initially store the $\eps$-quantization as a sorted list of tuples $(r,\eta)$ where $r = f(\{q_1, q_2, \ldots, q_{\sigma_f}\})$ for some valid basis $\{q_1, q_2, \ldots, q_{\sigma_f}\}$, and $\eta$ is the fraction of the good sets which are represented by this valid basis.  The details are outlined in Algorithm \ref{alg:count-prob}.

\begin{algorithm}[h!!t]
\caption{\label{alg:count-prob}Construct $\eps$-Quantization from $Q_P$}
\begin{algorithmic}[1]
\FOR {all valid bases $q_1, q_2, \ldots, q_{\sigma_f} \in Q_P$}
  \FOR {$i = 1$ \textbf{to} $n$}
    \IF {$q_1 \in Q_i$ or $q_2 \in Q_i$ or $\ldots$ or $q_{\sigma_f} \in Q_i$}
      \STATE Set $w_i = \frac{1}{|Q_i|}$.
    \ELSE
      \STATE Set $w_i = \frac{1}{|Q_i|} \sum_{q_j \in Q_i} 1(q_j \in \c{A}(\{q_1, q_2, \ldots, q_{\sigma_f}\}))$
    \ENDIF    
  \ENDFOR
  \STATE Insert $(f(q_1, q_2, \ldots, q_{\sigma_f}), \prod_i w_i)$ into $R$.
\ENDFOR
\end{algorithmic}
\end{algorithm}

We now summarize the full deterministic algorithm.
For each $(\mu_{p_i}, \c{A}_{f.n})$ we create an $\frac{\eps}{n}$-sample $Q_{p_i}$ of size $\alpha_{f}(n,\eps)$.
This makes the set $Q_P$ have $\eta = \sum_{i=1}^n |Q_{p_i}| = n \alpha_f(n,\eps)$ points in its sets.  We examine $O(\eta^{\sigma_f})$ valid bases.
For each valid basis we can evaluate $f(G)$  and compute $w_i$ in $\sf{RS}_f(n,\eps)$ time using a range searching data structure, after preprocessing or with a naive search.
Thus the deterministic running time for constructing an $\eps$-quantization is $O(\eta^{\nu_f} \sf{RS}_f(n,\eps))$ which is presented for various summarizing shapes in Table \ref{tbl:Afn-size}.  
For instance, for volume of the axis-aligned bounding box this takes $O(n^{6d}/\eps^{4d} \log^{3d} \frac{n}{\eps})$ time and for radius of the smallest enclosing disks this takes $O(n^{16.5}/\eps^{7} \log^{3.5} \frac{n}{\eps})$.
The total construction time for the $\eps$-quantizations is the sum of this time and the time to construct $n$ $(\eps/n)$-samples of $(\b{R}^d, \c{A}_{f,n})$; for both smallest enclosing disks and for axis-aligned bounding boxes it is the former.

A univariate $\eps$-quantization can be reduced to size $O(\frac{1}{\eps})$.  
Furthermore, we can create $k$-variate $\eps$-quantizations using the same procedure (such as the width in the $k$ dimensions of an axis-aligned bounding box).  The condition in Lemma \ref{thm:n-apx} where $f_{\mu_P}(P) \leq r$ can be replaced with a $k$-variate condition $f_{\mu_P}(P) \preceq r$ for $r \in \b{R}^k$.  Thus the same argument applies when we define $f : \b{R}^{dn} \to \b{R}^k$, and we can create $k$-variate $\eps$-quantizations of size $k^2 \frac{1}{\eps} \log^{O(k)} \frac{k}{\eps}$ in the same deterministic times as long as $\nu_f = O(k)$.  

\begin{theorem}
For any range space $(\mu_P, \c{A}_f)$ for a distribution $\mu_P$ of $n$ points, 
  with VC-dimension $\nu_f$, 
  where each $(\mu_{p_i}, \c{A}_{f,n})$ has an $\frac{\eps}{n}$-sample of size $\alpha_f(n, \eps)$,
  and where, after preprocessing $m$ points and with near-linear space and time, we can count the number of points in a shape from $\c{A}_f$ in $\textsf{RS}(m,\c{A}_f)$ time, 
we construct a $k$-variate $\eps$-quantization of $f^{\preceq}_{\mu_P}$ of size $k^2 \frac{1}{\eps} \log^{O(k)} \frac{k}{\eps}$ in $O((n \alpha_f(n, \eps))^{\nu_f} \cdot \textsf{RS}((n \alpha_f(n, \eps))^{\nu_f},\c{A}_f))$ time.
\end{theorem}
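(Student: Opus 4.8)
The plan is to assemble the three ingredients built in this section: (a) discretize each $\mu_{p_i}$ so that the induced discrete quantity $f(Q_P,\cdot)$ stays within $\eps$ of $f^{\preceq}_{\mu_P}$; (b) evaluate $f(Q_P,\cdot)$ \emph{exactly} by enumerating valid bases rather than the exponentially many good sets, via Algorithm~\ref{alg:count-prob}; and (c) compress the resulting weighted point set in $\b{R}^k$ to the claimed size using an $\eps$-sample reduction for one-sided (dominance) ranges.

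For (a), for each $i$ construct an $(\eps/n)$-sample $Q_{p_i}$ of $(\mu_{p_i},\c{A}_{f,n})$ of size $\alpha_f(n,\eps)$, which exists by hypothesis, and set $Q_P=\{Q_{p_i}\}_i$. Lemma~\ref{thm:n-apx} then gives $|\P[f_{\mu_P}(P)\le r]-f(Q_P,r)|\le(\eps/n)\cdot n=\eps$ for every scalar $r$. For the $k$-variate statement, replace the threshold $r$ by $r\in\b{R}^k$ and the event $f(\{q_1,\dots,q_n\})\le r$ by $f(\{q_1,\dots,q_n\})\preceq r$; the proof of Lemma~\ref{thm:n-apx} is unchanged once one observes that, with $q_1,\dots,q_{n-1}$ fixed, the set of $q_n$ with $f(\{q_1,\dots,q_n\})\preceq r$ is still a shape in $\c{A}_{f,n}$ (the $k$ coordinate constraints are imposed simultaneously, exactly as in the univariate case), so the same telescoping of the $n$ integrals against the $(\eps/n)$-samples goes through. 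Hence $|f^{\preceq}_{\mu_P}(r)-f(Q_P,r)|\le\eps$ for all $r\in\b{R}^k$, i.e. $f(Q_P,\cdot)$ is an $\eps$-quantization of $f^{\preceq}_{\mu_P}$.

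For (b), we run Algorithm~\ref{alg:count-prob}. The key fact is that although $Q_P$ contains $\prod_i|Q_{p_i}|$ good sets, each good set $G$ determines its summarizing shape through a \emph{unique} valid basis of at most $\sigma_f\le\nu_f$ of its points, one from each of $\sigma_f$ distinct $Q_{p_i}$; conversely a valid basis $B$ is the basis of a good set $G$ exactly when every $g_i\in Q_{p_i}$ with $Q_{p_i}$ disjoint from $B$ lies in $\c{A}(B)$. Thus the fraction of good sets with basis $B$ equals $\prod_i w_i$, where $w_i=1/|Q_{p_i}|$ if $Q_{p_i}$ meets $B$ and $w_i=|\{q\in Q_{p_i}:q\in\c{A}(B)\}|/|Q_{p_i}|$ otherwise, and $f(Q_P,r)$ is precisely the sum of $\prod_i w_i$ over valid bases $B$ with $f(B)\preceq r$. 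Enumerating valid bases costs $O(\eta^{\nu_f})$, where $\eta=\sum_i|Q_{p_i}|=n\alpha_f(n,\eps)$ (pick $\le\sigma_f$ of the $n$ distributions and one sample point in each, and $\sigma_f\le\nu_f$); for each basis one forms $\c{A}(B)$ and $f(B)$ and then runs $n$ range counts against the preprocessed sample points, each in $\textsf{RS}(\eta,\c{A}_f)$ time. This yields the exact list of tuples $(f(B),\prod_i w_i)$ representing $f(Q_P,\cdot)$, within the stated running time (the extra factor $n$ and the slack between $\eta$ and $\eta^{\nu_f}$ are absorbed into $\textsf{RS}((n\alpha_f(n,\eps))^{\nu_f},\c{A}_f)$).

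For (c), the weighted set of $O(\eta^{\nu_f})$ tuples lies in $\b{R}^k$ and, being an $\eps$-quantization of $f^{\preceq}_{\mu_P}$, is an $\eps$-sample of the dominance range space $(\cdot,\c{R}_+)$, whose VC-dimension is $k$ (here we invoke $\nu_f=O(k)$). Applying the deterministic $\eps$-sample reduction of \cite{Phi08} (or \cite{CM96}) and charging $O(k)$ storage per retained point gives a $k$-variate $\eps$-quantization of size $k^2\frac1\eps\log^{O(k)}\frac k\eps$, and $O(1/\eps)$ when $k=1$; combining with (a) and (b) proves the theorem. The step I expect to require the most care is the uniqueness of the valid basis of each good set used in (b): it forces the $\eps$-samples $Q_{p_i}$ to be built in sufficiently general position (the perturbation/tie-breaking construction referenced in Appendix~\ref{sec:eps-dist}), since otherwise a good set with two distinct $\le\sigma_f$-point bases would be miscounted by Algorithm~\ref{alg:count-prob}; a lesser point to double-check is that the coordinatewise $k$-variate argument keeps $\c{A}_{f,n}$ as the governing range family and that $\nu_f=O(k)$ is consistent with the dominance-range reduction bound.
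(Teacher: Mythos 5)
Your proposal follows essentially the same route as the paper: approximate each $\mu_{p_i}$ by an $\frac{\eps}{n}$-sample of $(\mu_{p_i},\c{A}_{f,n})$ and invoke Lemma~\ref{thm:n-apx} (with $\leq r$ replaced by the $k$-variate $\preceq r$), evaluate $f(Q_P,\cdot)$ exactly by enumerating the $O(\eta^{\nu_f})$ valid bases in Algorithm~\ref{alg:count-prob} with range-counting for the weights and relying on the general-position construction of the samples for basis uniqueness, and finally compress the resulting weighted point set via a deterministic $\eps$-sample reduction for dominance ranges using $\nu_f = O(k)$. This matches the paper's argument, including its implicit constant-factor error splitting in the final reduction, so I see no substantive difference or gap.
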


\section{Acknowledgements}
We would like to thank Pankaj K. Agarwal for many helpful discussions and Sariel Har-Peled for suggesting the use of wedges.

\bibliographystyle{plain}
\bibliography{uncert}

\begin{thebibliography}{10}

\bibitem{AHV04}
Pankaj~K. Agarwal, Sariel Har-Peled, and Kasturi~R. Varadarajan.
\newblock Approximating extent measure of points.
\newblock {\em Journal of ACM}, 51(4):2004, 2004.

\bibitem{AS00}
Rakesh Agarwal and Ramakrishnan Srikant.
\newblock Privacy-preserving data mining.
\newblock {\em ACM SIGMOD Record}, 29:439--450, 2000.

\bibitem{bs-ads-04}
Deepak Bandyopadhyay and Jack Snoeyink.
\newblock Almost-{D}elaunay simplices: Nearest neighbor relations for imprecise
  points.
\newblock In {\em ACM-SIAM Symp on Discrete Algorithms}, pages 403--412, 2004.

\bibitem{Cha06}
Timothy Chan.
\newblock Faster core-set constructions and data-stream algorithms in fixed
  dimensions.
\newblock {\em Computational Geometry: Theory and Applications}, 35:20--35,
  2006.

\bibitem{CM96}
Bernard Chazelle and Jiri Matousek.
\newblock On linear-time deterministic algorithms for optimization problems in
  fixed dimensions.
\newblock {\em Journal of Algorithms}, 21:579--597, 1996.

\bibitem{CKP03}
Reynold Cheng, Dmitri~V. Kalashnikov, and Sunil Prabhakar.
\newblock Evaluating probabilitic queries over imprecise data.
\newblock In {\em Proceedings 2003 ACM SIGMOD International Conference on
  Management of Data}, 2003.

\bibitem{CEMST96}
Kenneth~L. Clarkson, David Eppstein, Gary~L. Miller, Carl Sturtivant, and
  Shang-Hua Teng.
\newblock Approximating center points with iterative {R}adon points.
\newblock {\em International Journal of Computational Geometry and
  Applications}, 6:357--377, 1996.

\bibitem{EP04}
Austin Eliazar and Ronald Parr.
\newblock Dp-slam 2.0.
\newblock In {\em Proceedings 2004 IEEE International Conference on Robotics
  and Automation}, 2004.

\bibitem{Gar99}
Bernd G\"{a}rtner.
\newblock Fast and robust smallest enclosing balls.
\newblock In {\em Proceedings 7th Annual European Symposium on Algorithms},
  volume LNCS 1643, pages 325--338, 1999.

\bibitem{gss-egbra-89}
Leonidas~J. Guibas, D.~Salesin, and J.~Stolfi.
\newblock Epsilon geometry: building robust algorithms from imprecise
  computations.
\newblock In {\em Proc. 5th Annu. ACM Sympos. Comput. Geom.}, pages 208--217,
  1989.

\bibitem{gss-cscah-93}
Leonidas~J. Guibas, D.~Salesin, and J.~Stolfi.
\newblock Constructing strongly convex approximate hulls with inaccurate
  primitives.
\newblock {\em Algorithmica}, 9:534--560, 1993.

\bibitem{HPbook}
Sariel Har-Peled.
\newblock {\em Approximation Algorithm in Geometry}.
\newblock {http://valis.cs.uiuc.edu/\~{}sariel/teach/notes/aprx/}, 2008.

\bibitem{hm-ticpps-08}
Martin Held and Joseph S.~B. Mitchell.
\newblock Triangulating input-constrained planar point sets.
\newblock {\em Information Processing Letters}, page to appear, 2008.

\bibitem{k-bmips-08}
Heinrich Kruger.
\newblock Basic measures for imprecise point sets in $\mathbb{R}^d$.
\newblock Master's thesis, Utrecht University, 2008.

\bibitem{ls-dtip-08}
Maarten L{\"o}ffler and Jack Snoeyink.
\newblock {Delaunay} triangulations of imprecise points in linear time after
  preprocessing.
\newblock In {\em Proc. 24th Sympoium on Computational Geometry}, pages
  298--304, 2008.

\bibitem{Mat91}
Jiri Matousek.
\newblock Approximations and optimal geometric divide-and-conquer.
\newblock In {\em Proceedings of the 23rd Annual ACM Symposium on Theory of
  Computing}, pages 505--511, 1991.

\bibitem{Mat99}
Jiri Matousek.
\newblock {\em Geometric Discrepancy; An Illustrated Guide}, volume~18 of {\em
  Algorithms and Combinatorics}.
\newblock Springer, 1999.

\bibitem{MWW93}
Jiri Matousek, Emo Welzl, and Lorenz Wernisch.
\newblock Discrepancy and approximations for bounded vc-dimension.
\newblock {\em Combinatorica}, 13(4):455--466, 1993.

\bibitem{nt-teb-00}
T.~Nagai and N.~Tokura.
\newblock Tight error bounds of geometric problems on convex objects with
  imprecise coordinates.
\newblock In {\em Jap.\ Conf.\ on Discrete and Comput.\ Geom.}, LNCS 2098,
  pages 252--263, 2000.

\bibitem{obj-ue-05}
Y.~Ostrovsky-Berman and L.~Joskowicz.
\newblock Uncertainty envelopes.
\newblock In {\em Abstracts 21st European Workshop on Comput.\ Geom.}, pages
  175--178, 2005.

\bibitem{Phi08}
Jeff~M. Phillips.
\newblock Algorithms for $\eps$-approximations of terrains.
\newblock In {\em Proceedings 35th International Colloquium on Automata,
  Languages, and Programming}, 2008.
\newblock arXiV 0801.2793.

\bibitem{PYCDB06}
Shobha Potluri, Anthony~K. Yan, James~J. Chou, Bruce~R. Donald, and Chris
  Baily-Kellogg.
\newblock Structure determination of symmetric homo-oligomers by complete
  search of symmetry configuration space, using nmr restraints and van der
  {Waals} packing.
\newblock {\em Proteins}, 65:203--219, 2006.

\bibitem{kl-bgmips-06}
Marc van Kreveld and Maarten L{\"o}ffler.
\newblock Largest bounding box, smallest diameter, and related problems on
  imprecise points.
\newblock In {\em Proc. 10th Workshop on Algorithms and Data Structures}, LNCS
  4619, pages 447--458, 2007.

\bibitem{VC71}
Vladimir Vapnik and Alexey Chervonenkis.
\newblock On the uniform convergence of relative frequencies of events to their
  probabilities.
\newblock {\em Theory of Probability and its Applications}, 16:264--280, 1971.

\bibitem{YAPV04}
Hai Yu, Pankaj~K. Agarwal, Raghunath Poreddy, and Kasturi~R. Varadarajan.
\newblock Practical methods for shape fitting and kinetic data structures using
  coresets.
\newblock In {\em Proceedings 20th Annual Symposium on Computational Geometry},
  2004.

\end{thebibliography}

\appendix

\section{Primer on $\eps$-Samples}
\label{app:eps-sample}

We recall from Section \ref{sec:prelim} that for a range space $(P, \c{A})$ an $\eps$-sample $Q \subseteq P$ guarantees 
$$
\forall_{R \in \c{A}} \left|\frac{\phi(R \cap Q)}{\phi(Q)} - \frac{\phi(R \cap P)}{\phi(P)}\right| \leq \eps,
$$
where $|\cdot|$ takes the absolute value and $\phi(\cdot)$ returns the measure of a point set.  In the discrete case $\phi(Q)$ returns the cardinality of $Q$.  

When $P \subset \b{R}^d$ we describe a few common examples of $\c{A}$. 
Let $\c{B}$ describe all subsets of $P$ determined by containment in some ball.
Let $\c{R}_d$ describe all subsets of $P$ defined by containment in some $d$-dimensional axis-aligned box.
Let $\c{H}$ describe all subsets of $P$ defined by containment in some halfspace.  
Throughout the paper we use $\c{A}$ generically to represent one such family of ranges.  

Also recall from Section \ref{sec:prelim} that if $(P, \c{A})$ has bounded VC-dimension $\nu$, then we can create an $\eps$-sample, with probability $1-\delta$, by sampling $O(\frac{\nu}{\eps^2} \log \frac{\nu}{\eps \delta})$ points at random, or deterministically of size $O(\frac{\nu}{\eps^2} \log \frac{\nu}{\eps})$ in time $O(\nu^{2 \nu} n (\frac{1}{\eps^2} \log \frac{\nu}{\eps})^\nu)$.  There exist $\eps$-samples of slightly smaller sizes~\cite{MWW93}, but efficient constructions are not known.  
If $(P, \c{A})$ has VC-dimension $\nu$, this also implies that $(P, \c{A})$ contains at most $|P|^\nu$ sets.  

Similarly, the \emph{shatter function} $\pi_{(P, \c{A})}(m)$ of a range space $(P, \c{A})$ is the maximum number of sets $S \in (P,\c{A})$ where $|S|=m$.  The \emph{shatter dimension} $\sigma$ of a range space $(P, \c{A})$ is the minimum value such that $\pi_{(P,\c{A})}(m) = O(m^{\sigma})$.  It can be shown~\cite{HPbook} that $\sigma \leq \nu$ and $\nu = O(\sigma \log \sigma)$.  

For a range space $(P, \c{A})$ the \emph{dual range space} is defined $(\c{A}, P^*)$ where $P^*$ is all subsets $\c{A}_p \subseteq \c{A}$ defined for an element $p \in P$ such that $\c{A}_p = \{A \in \c{A} \mid p \in A\}$.  If $(P,\c{A})$ has VC-dimension $\nu$, then $(\c{A}, P^*)$ has VC-dimension $\leq 2^{\nu+1}$.  Thus, if the VC-dimension of $(\c{A}, P^*)$ is constant, then the VC-dimension of $(P,\c{A})$ is also constant \cite{Mat99}.  Hence, the standard $\eps$-sample theorems apply to dual range spaces as well.  

Let $g : \b{R} \to \b{R}^+$ be a function where $\int_{x=-\infty}^\infty g(x) \; dx = 1$.  
We can create an $\eps$-sample $Q_g$ of $(g,\c{I}_+)$, where $\c{I}_+$ describes the set of all one-sided intervals of the form $(-\infty, t)$, so that
$$
\max_{t} \left|\int_{x = -\infty}^{t} g(x) \; dx - \frac{1}{|Q_g|} \sum_{q \in Q_g} 1(q < t) \right| \leq \eps.
$$
We can construct $Q_g$ of size $O(\frac{1}{\eps})$ by choosing a set of points in  $Q_g$ so that the integral between two consecutive points is always $\eps$.  But we do not need to be so precise.  Consider the set of $\frac{2}{\eps}$ points $\{ q^\prime_1, q^\prime_2, \ldots, q^\prime_{\frac{2}{\eps}} \}$ such that $\int_{x=-\infty}^{q_i^\prime} = i \eps /2$.   Any set of $\frac{2}{\eps}$ points $Q_g = \{q_1, q_2, \ldots, q_{\frac{2}{\eps}} \}$ such that $q^\prime_i \leq q_i \leq q^\prime_{i+1}$ is an $\eps$-sample.

\subsection{$\eps$-Samples of Distributions.}
\label{sec:eps-dist}
We say a subset $W \subset \b{R}^d$ is \emph{polygonal approximable} if there exists a polygonal shape $S$ with $m$ facets such that $\phi(W \setminus S) + \phi(S \setminus W) \leq \eps \phi(W)$ for any $\eps > 0$.  Usually, $m$ is dependent on $\eps$.  In turn, such a polygonal shape $S$ describes a continuous point set where $(S, \c{A})$ can be given an $\eps$-sample $Q$ using $O(\frac{1}{\eps^2} \log \frac{1}{\eps})$ points if $(S, \c{A})$ has bounded VC-dimension \cite{Mat99} or using $O(\frac{1}{\eps} \log^{2k} \frac{1}{\eps})$ points if $\c{A}$ is defined by a constant $k$ number of directions \cite{Phi08}.  For instance, where $\c{A} = \c{B}$ is the set of all balls then the first case applies, and when $\c{A} = \c{R}_2$ is the set of all axis-aligned rectangles then either case applies.  

A shape $W \subset \b{R}^{d+1}$ may describe a distribution $\mu : \b{R}^d \to [0,1]$.  We note that many common distributions like multivariate Gaussian distributions are polygonally approximable.  For instance for a range space $(\mu, \c{B})$, then the range space of the associated shape $W_\mu$ is $(W_\mu, \c{B} \times \b{R})$ where $\c{B} \times \b{R}$ describes balls in $\b{R}^d$ for the first $d$ coordinates and any points in the $(d+1)$th coordinate.  

The general scheme to create an $\eps$-sample for $(S, \c{A})$, where $S \in \b{R}^d$ is a polygonal shape, is to use a lattice $\Lambda$ of points.   
A \emph{lattice} $\Lambda$ in $\b{R}^d$ is an infinite set of points defined such that for $d$ vectors $\{v_1, \ldots, v_d\}$ that form a basis, for any point $p \in \Lambda$, $p + v_i$ and $p - v_i$ are also in $\Lambda$ for any $i \in [1,d]$.  
We first create a discrete $\frac{\eps}{2}$-sample $M \subset \Lambda$ of  $(S, \c{A})$ and then create an $\frac{\eps}{2}$-sample $Q$ of $(M, \c{A})$ using standard techniques~\cite{CM96,Phi08}.  Then $Q$ is an $\eps$-sample of $(S, \c{A})$.
For a shape $S$ with $m$ $(d-1)$-faces on its boundary, any subset $A^\prime \subset \b{R}^d$ that is described by a subset from $(S, \c{A})$ is an intersection $A^\prime = A \cap S$ for some $A \in \c{A}$.  Since $S$ has $m$ $(d-1)$-dimensional faces, we can bound the VC-dimension of $(S, \c{A})$ as $\nu = O((m + \nu_{\c{A}}) \log (m+\nu_{\c{A}}))$ where $\nu_{\c{A}}$ is the VC-dimension of $(\b{R}^d, \c{A})$.  
Finally the set $M = S \cap \Lambda$ is determined by choosing an arbitrary initial origin point in $\Lambda$ and then uniformly scaling all vectors $\{v_1, \ldots, v_d\}$ until $|M| = \Theta(\frac{\nu}{\eps^2} \log \frac{\nu}{\eps})$~\cite{Mat99}.  This construction follows a less general but smaller construction in Phillips~\cite{Phi08}.  

It follows that we can create such an $\eps$-sample of size $|M|$ in time $O(|M| m \log |M|)$ by starting with a scaling of the lattice so a constant number of points are in $S$ and then doubling the scale until we get to within a factor of $d$ of $|M|$.  If there are $n$ points inside $S$, it takes $O(nm)$ time to count them. 

\begin{lemma}
For a polygonal shape $S \subset \b{R}^d$ with $m$ facets, we can construct an $\eps$-sample for $(S, \c{A})$ of size $O(\frac{\nu}{\eps^2} \log \frac{\nu}{\eps})$ in time $O(m \frac{\nu}{\eps^2} \log^2 \frac{\nu}{\eps})$, where $(S, \c{A})$ has VC-dimension $\nu_{\c{A}}$ and $\nu = O((\nu_{\c{A}}+m) \log (\nu_{\c{A}}+m))$.  
\label{lem:mu-eps}
\end{lemma}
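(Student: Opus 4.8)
The plan is to assemble Lemma~\ref{lem:mu-eps} from the ingredients laid out immediately above its statement, namely the lattice-based sampling scheme for polygonal shapes. First I would recall the structural fact that a shape $S$ with $m$ facets contributes at most $m$ extra $(d-1)$-dimensional boundary pieces to any range, so that the composite range space $(S,\c{A})$ — whose ranges are intersections $A \cap S$ with $A \in \c{A}$ — has VC-dimension $\nu = O((\nu_{\c{A}} + m) \log(\nu_{\c{A}} + m))$; this is just the standard bound on VC-dimension of ranges formed by boolean combinations of a bounded number of sets of bounded VC-dimension, which I may cite from \cite{Mat99}.

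Next I would carry out the two-stage construction. Stage one: build a discrete $\frac{\eps}{2}$-sample $M \subseteq \Lambda$ of $(S,\c{A})$ by intersecting $S$ with a scaled lattice $\Lambda$, choosing the scale so that $|M| = \Theta(\frac{\nu}{\eps^2}\log\frac{\nu}{\eps})$. The key point is that a uniformly scaled lattice intersected with a fixed region behaves, as the scale shrinks, like a uniform sample of the region, so once $|M|$ reaches the $\Theta(\frac{\nu}{\eps^2}\log\frac{\nu}{\eps})$ threshold it is an $\frac{\eps}{2}$-sample with respect to any VC-dimension-$\nu$ family; this is exactly the argument of \cite{Phi08,Mat99}. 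Stage two: apply a standard deterministic $\eps$-sample construction \cite{CM96,Phi08} to $(M,\c{A})$ to produce $Q$ of size $O(\frac{\nu}{\eps^2}\log\frac{\nu}{\eps})$; composing the two stages, $Q$ is an $\eps$-sample of $(S,\c{A})$ by the triangle inequality on the sampling error. For the running time, I would note that the correct lattice scaling is found by starting with a scale putting $O(1)$ points in $S$ and repeatedly doubling until $|M|$ is within a factor $d$ of the target, each doubling costing $O(|M| \cdot m)$ to test membership of the $O(|M|)$ candidate lattice points against the $m$ facets of $S$, for a total of $O(|M| m \log|M|)$; the second-stage reduction on $M$ is dominated by this. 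Plugging $|M| = O(\frac{\nu}{\eps^2}\log\frac{\nu}{\eps})$ gives the claimed $O(m \frac{\nu}{\eps^2}\log^2\frac{\nu}{\eps})$ bound.

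The main obstacle I anticipate is making the lattice argument rigorous: one must argue that \emph{some} uniform scaling of $\Lambda$ yields an $M$ that is simultaneously a good $\frac{\eps}{2}$-sample \emph{and} has the right cardinality, and that the doubling search actually lands in the admissible window rather than skipping over it. The cleanest route is to invoke the construction already established in Phillips~\cite{Phi08} essentially verbatim, observing that the only change here is the more general range family $\c{A}$ with VC-dimension $\nu_{\c{A}}$ in place of the $k$-direction families treated there, and that this only affects the bound on $\nu$ and hence the constants — the mechanism of the proof is unchanged. So the proof is largely a matter of citing Lemma-level facts in the right order and tracking the VC-dimension bookkeeping; I would keep it to a short paragraph that states the two-stage scheme, invokes \cite{Mat99,Phi08,CM96} for correctness of each stage, and does the arithmetic for the time bound.
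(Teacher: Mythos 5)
Your construction is essentially the paper's own: the paper proves this lemma by intersecting $S$ with a uniformly scaled lattice $\Lambda$, bounding the VC-dimension of $(S,\c{A})$ by $\nu = O((\nu_{\c{A}}+m)\log(\nu_{\c{A}}+m))$ via the $m$ boundary facets, and finding the right scale by a doubling search with an $O(|M|\,m)$ membership count per step, exactly as you describe. One small correction: the second-stage deterministic reduction of $(M,\c{A})$ via \cite{CM96,Phi08} is \emph{not} dominated by the lattice construction (it costs on the order of $\nu^{3\nu}|M|(\frac{1}{\eps^2}\log\frac{\nu}{\eps})^{\nu}$), but it is also unnecessary here, since $M$ itself already has the claimed size $O(\frac{\nu}{\eps^2}\log\frac{\nu}{\eps})$ and can be output directly as the $\eps$-sample, which is what the stated time bound $O(m\frac{\nu}{\eps^2}\log^2\frac{\nu}{\eps})$ reflects.
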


An important part of the above construction is the arbitrary choice of the origin points of the lattice $\Lambda$.  This allows us to arbitrarily shift the lattice defining $M$ and thus the set $Q$.  In Section \ref{sec:deterministic} we need to construct $n$ $\eps$-samples $\{Q_1, \ldots, Q_n\}$ for $n$ range spaces $\{(S_1, \c{A}), \ldots, (S_n, \c{A})\}$.  
In Algorithm \ref{alg:count-prob} we examine sets of $\nu_{\c{A}}$ points, each from separate $\eps$-samples that define a minimal shape $A \in \c{A}$.  It is important that we do not have two such (possibly not disjoint) sets of $\nu_{\c{A}}$ points that define the same minimal shape $A \in \c{A}$.  (Note, this does not include cases where say two points are antipodal on a disk and any other point in the disk added to a set of $\nu_{\c{A}}=3$ points forms such a set; it refers to cases where say four points lie (degenerately) on the boundary of a disc.)
We can guarantee this by enforcing a property on all pairs of origin points $p$ and $q$ for $(S_i, \c{A})$ and $(S_j, \c{A})$.  For the purpose of construction, it is easiest to consider only the $l$th coordinates $p_l$ and $q_l$ for any pair of origin points or lattice vectors (where the same lattice vectors are used for each lattice).  We enforce a specific property on every such pair $p_l$ and $q_l$, for all $l$ and all distributions and lattice vectors.  

First, consider the case where $\c{A} = \c{R}_d$ describes axis-aligned bounding boxes.  It is easy to see that if for all pairs $p_l$ and $q_l$ that $(p_l - q_l)$ is irrational, then we cannot have $>2d$ points on the boundary of an axis-aligned bounding box, hence the desired property is satisfied.  

Now consider the more complicated case where $\c{A} = \c{B}$ describes smallest enclosing balls.  There is a polynomial of degree $2$ that describes the boundary of the ball, so we can enforce that for all pairs $p_l$ and $q_l$ that $(p_l - q_l)$ is of the form $c_1 (r_{p_l})^{1/3} + c_2 (r_{q_l})^{1/3}$ where $c_1$ and $c_2$ are rational coefficients and $r_{p_l}$ and $r_{q_l}$ are distinct integers that are not multiple of cubes.    Now if $\nu = d+1$ such points satisfy (and in fact define) the equation of the boundary of a ball, then no $(d+2)$th point which has this property with respect to the first $d+1$ can also satisfy this equation.  

More generally, if $\c{A}$ can be described with a polynomial of degree $p$ with $\nu$ variables, then enforce that every pair of coordinates are the sum of $(p+1)$-roots.  This ensures that no $\nu+1$ points can satisfy the equation, and the undesired situation cannot occur.

\section{A Center Point for $\mu_P$}
\label{app:center-point}
We can create a point $\bar{q} \in \b{R}^d$ that is in the convex hull of a sampled point set $Q$ from $\mu_P$ with high probability.  This implies that for any summarizing shape that contains the convex hull, $\bar{q}$ is also contained in that summarizing shape.  Let $\c{H}$ be the family of subsets defined by halfspaces.
We use the following algorithm:
\begin{enumerate}
\item Create $2$-approximate center points $\bar{p}_i$ for each $\mu_{p_i}$ (i.e. using a $(1/4)$-sample of $(\mu_{p_i}, \c{H})$).  Let the set be $\bar{P}$.
\item Create $2$-approximate center point $\bar{q}$ of $\bar{P}$.  
\end{enumerate}
All steps can be done in $O(n)$ time because we can create $(1/4)$-samples of all range spaces $(\mu_{p_i}, \c{H})$ and of $(\bar{P},\c{H})$ in $O(n)$ time.  Constructing approximate center points can be done in $O(1)$ time on a constant sized set, such as $(1/4)$-sample~\cite{CEMST96}.  

\begin{lemma}
Given a distribution of a point set $\mu_P$ (such that each point distribution is polygonally approximable) of $n$ points in $\b{R}^d$, there is an $O(n)$ time algorithm to create a point $\bar{q}$ that will be in the convex hull of a point set drawn from $\mu_P$ with probability $\geq 1 - ((1-1/(2d+2))^{1/(2d+2)})^n$.  
\label{lem:ext-center}
\end{lemma}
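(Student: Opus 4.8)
The plan is to bound the failure probability for each step of the two-stage centerpoint construction and then combine them. Recall the classical centerpoint theorem: for any (continuous or discrete) mass distribution in $\b{R}^d$, there is a point $c$ such that every halfspace containing $c$ carries at least a $1/(d+1)$ fraction of the mass. A \emph{$2$-approximate centerpoint}, computed from a $(1/(2d+2))$-sample, is a point $c$ such that every halfspace containing $c$ carries at least a $1/(2d+2)$ fraction of the true mass; this is exactly why we take the sample parameter $1/(2d+2)$ in step~1 and step~2. I would first record this: because $(\mu_{p_i},\c{H})$ has VC-dimension $d+1$, a $(1/(2d+2))$-sample can be built in $O(1)$ size (the bound depends only on $d$), hence in $O(n)$ total time over all $i$, and likewise for $(\bar P,\c{H})$; a constant-size approximate centerpoint is then found in $O(1)$ time each, for $O(n)$ time overall, matching the stated runtime.

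The heart of the argument is the probabilistic containment guarantee. Fix a direction $u\in\b{S}^{d-1}$ and consider the halfspace $H_u=\{x : \IP{x}{u}\le \IP{\bar q}{u}\}$ through $\bar q$. Since $\bar q$ is a $2$-approximate centerpoint of $\bar P$, $H_u$ contains at least a $1/(2d+2)$ fraction of the points $\bar p_i$. For a random point set $Q=\{q_1,\ldots,q_n\}$ drawn from $\mu_P$, the point $q_i$ fails to lie strictly on the $\bar q$-side of the supporting hyperplane of $Q[{-u}]$ only if $q_i$ lands in a certain halfspace; but each $\bar p_i$ is itself a $2$-approximate centerpoint of $\mu_{p_i}$, so \emph{if} $\bar p_i\in H_u$ then $q_i$ lands on the correct side of the bounding hyperplane in direction $u$ with probability $\ge 1/(2d+2)$, independently across $i$. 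I would make precise that $\bar q\notin \mathrm{conv}(Q)$ forces the existence of a separating direction $u$ in which \emph{every} $q_i$ with $\bar p_i\in H_u$ lands on the wrong side; the number of such indices is $\ge n/(2d+2)$, so conditioned on that direction the probability of all of them failing is at most $\bigl(1-1/(2d+2)\bigr)^{n/(2d+2)}$.

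The main obstacle — and the step I would spend the most care on — is that there are infinitely many directions $u$, so one cannot naively union-bound over them; instead I would argue combinatorially that it suffices to consider the finitely many "combinatorially distinct" separating directions, i.e. those determined by facets of the arrangement defining which $\bar p_i$ lie in $H_u$, and observe that the event "$\bar q\notin\mathrm{conv}(Q)$" is contained in the event "there exists a coordinate-like witness direction in which a $\ge 1/(2d+2)$-fraction of the $q_i$ all fail," whose probability is at most $\bigl((1-1/(2d+2))^{1/(2d+2)}\bigr)^{n}$, giving the claimed bound $1-\bigl((1-1/(2d+2))^{1/(2d+2)}\bigr)^{n}$ on success. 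The remaining pieces — VC-dimension $d+1$ for halfspaces, $O(1)$-size $(1/(2d+2))$-samples, $O(n)$ total construction and centerpoint time — are routine given the preliminaries, so I would state them briefly and cite \cite{CEMST96} for the constant-size approximate centerpoint computation.
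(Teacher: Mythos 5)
Your proposal follows essentially the same route as the paper: the identical two-level construction of $2$-approximate centerpoints, and the identical core estimate — for a direction $u$, at least $n/(2d+2)$ of the $\bar{p}_i$ satisfy $\IP{\bar{q}}{u} \leq \IP{\bar{p}_i}{u}$, and for each such $i$ the sample $q_i$ satisfies $\IP{\bar{p}_i}{u} \leq \IP{q_i}{u}$ independently with probability at least $1/(2d+2)$, so all of them fall short with probability at most $(1-1/(2d+2))^{n/(2d+2)} = ((1-1/(2d+2))^{1/(2d+2)})^n$; the $O(n)$ running-time accounting is also the same. The one point where you diverge, the worry about infinitely many directions, is not addressed in the paper at all: its proof fixes an arbitrary direction, bounds the per-direction failure probability, and asserts this bound for the event that some separating halfspace exists; note also that your proposed remedy, a union bound over the combinatorially distinct directions determined by $\bar{P}$, would if carried out literally introduce a factor polynomial in $n$ and therefore would not recover the stated bound exactly — as written, your final step asserts the existential bound in the same way the paper does rather than closing that gap.
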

\begin{proof}
For each $\bar{p}_i \in \bar{P}$, any halfspace that has $\bar{p}_i$ on its boundary and does not contain $\bar{q}$ has probability $\geq 1/(2d+2)$ of containing a random point from $\mu_{p_i}$.  
Thus for any direction $u \in \b{S}^{d-1}$ there are at least $n/(2d+2)$ points $\bar{p}_i$ from $\bar{P}$ for which $\IP{q}{u} \leq \IP{\bar{p}_i}{u}$.  And for each of those points $\bar{p}_i$, the probability that the point $q_i$ sampled from $\mu_{p_i}$ is such that $\IP{\bar{p}_i}{u} \leq \IP{q_i}{u}$ (and thus $\IP{\bar{q}}{u} \leq \IP{q_i}{u}$) is $\leq 1/(2d+2)$. 
Hence, the probability that there is a separating halfspace between $\bar{q}$ and the convex hull of $Q$ (where the halfspace is orthogonal to some direction $u$) is $\leq (1-1/(2d+2))^{n/(2d+2)} = ((1-1/(2d+2))^{1/(2d+2)})^n$.   
\end{proof}

\begin{theorem}
For a set of $m < n$ point sets drawn i.i.d. from $\mu_P$, it follows that $\bar{q}$ is in each of the $m$ convex hulls for each point sets with high probability (specifically with probability $\geq 1 - m\left((1-1/(2d+2))^{1/(2d+2)}\right)^n$).
\end{theorem}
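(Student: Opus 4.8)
The plan is to reduce the statement to Lemma~\ref{lem:ext-center} by a union bound. First I would fix the point $\bar{q} \in \b{R}^d$ output by the $O(n)$-time algorithm of Appendix~\ref{app:center-point}. The crucial observation is that $\bar{q}$ depends only on the distribution $\mu_P$ (it is built in Step~1 from $(1/4)$-samples of each $(\mu_{p_i}, \c{H})$ and in Step~2 from $\bar{P}$), and in particular does \emph{not} depend on any of the $m$ sampled point sets. Therefore, for each of the $m$ i.i.d.\ draws $Q_1, \ldots, Q_m$ from $\mu_P$, Lemma~\ref{lem:ext-center} applies verbatim with this same fixed $\bar{q}$, giving for every $j$
\[
\P[\bar{q} \notin \mathrm{conv}(Q_j)] \leq \left((1-1/(2d+2))^{1/(2d+2)}\right)^n.
\]

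Next I would apply the union bound over the $m$ bad events $\{\bar{q} \notin \mathrm{conv}(Q_j)\}$: the probability that $\bar{q}$ escapes at least one of the $m$ convex hulls is at most $m\left((1-1/(2d+2))^{1/(2d+2)}\right)^n$. Taking complements, $\bar{q}$ lies in all $m$ convex hulls simultaneously with probability at least $1 - m\left((1-1/(2d+2))^{1/(2d+2)}\right)^n$, which is exactly the claimed bound. Note that independence of the $Q_j$ is not actually needed for this argument, since the union bound already suffices; independence would only let one sharpen the estimate to $1 - \left(1 - (1-1/(2d+2))^{n/(2d+2)}\right)^m$, but the stated bound is the clean one.

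The only point that requires a moment's care --- and the closest thing here to an obstacle --- is precisely the remark above: one must check that $\bar{q}$ is a single fixed point determined by $\mu_P$ alone, so that the per-set failure probability from Lemma~\ref{lem:ext-center} is legitimately reused $m$ times. This is immediate from the two-step construction, which only ever inspects the distributions $\mu_{p_i}$. Finally, since $m < n$ and the base $(1-1/(2d+2))^{1/(2d+2)}$ is a fixed constant strictly less than $1$, the error term $m\left((1-1/(2d+2))^{1/(2d+2)}\right)^n$ still decays geometrically in $n$, so the phrase ``with high probability'' is justified.
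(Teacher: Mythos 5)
Your proof is correct and reaches the stated bound by a slightly different route than the paper. Both arguments rest on Lemma~\ref{lem:ext-center} applied to a single sampled point set, and you are right (and it is worth making explicit, as you do) that $\bar{q}$ is determined by $\mu_P$ alone, so the per-set failure bound $\beta^n$, with $\beta = (1-1/(2d+2))^{1/(2d+2)}$, can legitimately be reused for each of the $m$ draws. Where you diverge is in the combination step: you union-bound the $m$ failure events directly, getting failure probability at most $m\beta^n$, which needs neither independence of the $Q_j$ nor the hypothesis $m<n$. The paper instead multiplies the per-set success probabilities, using independence to lower-bound the success probability by $(1-\beta^n)^m$, and then expands this by the binomial theorem, invoking $m<n$ to argue that the terms beyond $1 - m\beta^n$ only increase the probability; your route makes that bookkeeping (and the $m<n$ assumption) unnecessary, which is a small but genuine simplification of the same underlying argument. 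One slip in your aside: under independence the sharpened lower bound on the success probability is $(1-\beta^n)^m = \bigl(1-(1-1/(2d+2))^{n/(2d+2)}\bigr)^m$, not $1-\bigl(1-(1-1/(2d+2))^{n/(2d+2)}\bigr)^m$ as written; the latter is an upper bound on the failure probability. This does not affect your main argument.
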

\begin{proof}
Let $\beta = (1-1/(2d+2))^{1/(2d+2)}$.  
For any one point set the probability that $\bar{q}$ is contained in the convex hull is $> 1-\beta^n$.  By the union bound, the probability that it is contained in all $m$ convex hulls is $> (1-\beta^n)^m = 1 - m \beta^n + {m \choose 2} \beta^{2n} - {m \choose 3} \beta^{3n} + \ldots$.  
Since $n > m$, the sum of all terms after the first two in the expansion increase the probability.
\end{proof}

Thus because the summarizing shapes are convex, then for any point $q$, the line segment $\overline{q \bar{q}}$ is completely contained in a convex summarizing shape if and only if $q$ is.  Thus for every boundary of a summarizing shape $\overline{q \bar{q}}$ crosses, $q$ is outside that summarizing shape.  This implies the following corollary.

\begin{corollary}
If the summarizing shape is convex, then the $\gamma$-layer, for $\gamma < 1-1/m$, exists, is connected, and is star-shaped with high probability, specifically with probability $\geq 1 - m\left((1-1/(2d+2))^{1/(2d+2)}\right)^n$.  
\end{corollary}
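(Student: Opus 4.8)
The plan is to read the corollary off the preceding theorem together with the visibility remark stated just before it, so that all three assertions become deterministic consequences of a single favorable event. Write $S_1,\dots,S_m$ for the $m$ sampled summarizing shapes and $\hat h(x)=\frac1m\,|\{j : x\in S_j\}|$ for the $\eps$-\sip function, so that the $\gamma$-layer is the superlevel set $L_\gamma=\{x\in\b{R}^d : \hat h(x)>\gamma\}$ enclosed by the $\gamma$-isoline. Let $E$ be the event of the theorem that the center point $\bar q$ lies in all $m$ convex hulls; since each summarizing shape is convex and contains the convex hull of the point set it summarizes (as in the discussion preceding the corollary), under $E$ we have $\bar q\in S_j$ for every $j$, and $\P[E]\ge 1-m\beta^n$ with $\beta=(1-1/(2d+2))^{1/(2d+2)}$. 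The only geometric input I would use is that remark: for a convex $S_j$ with $\bar q\in S_j$ and any $x$, the segment $\overline{x \bar{q}}$ lies in $S_j$ if and only if $x\in S_j$.

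Conditioned on $E$, I would establish the three properties in turn. \emph{Non-emptiness:} $\hat h(\bar q)=m/m=1$, and the hypothesis $\gamma<1-1/m$ gives in particular $\gamma<1$, so $\hat h(\bar q)>\gamma$ and $\bar q\in L_\gamma$; moreover $L_\gamma$ is bounded whenever the summarizing shapes are (as for the smallest enclosing ball or box), so the $\gamma$-isoline is a genuine bounded closed curve. \emph{Star-shapedness with respect to $\bar q$:} fix $x\in L_\gamma$ and any point $y$ on $\overline{x \bar{q}}$; for each $j$ with $x\in S_j$ the visibility fact gives $\overline{x \bar{q}}\subseteq S_j$, hence $y\in S_j$, so $\{j : x\in S_j\}\subseteq\{j : y\in S_j\}$ and thus $\hat h(y)\ge\hat h(x)>\gamma$, i.e.\ $y\in L_\gamma$; therefore every point of $L_\gamma$ is visible from $\bar q$ within $L_\gamma$. \emph{Connectedness:} a star-shaped set is path-connected, since any $x,x'\in L_\gamma$ are joined inside $L_\gamma$ by the concatenation of $\overline{x \bar{q}}$ and $\overline{\bar{q} x'}$, both contained in $L_\gamma$ by the previous step. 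Since all three conclusions follow deterministically from $E$, they hold with probability at least $\P[E]\ge 1-m\beta^n$, matching the stated bound.

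There is no genuinely hard step here — it is a corollary — so the thing I would be most careful about is bookkeeping rather than mathematics: aligning the definition of the $\gamma$-layer and the strict inequality $\hat h(x)>\gamma$ with the hypothesis $\gamma<1-1/m$, which is exactly what places the value-$1$ point $\bar q$ strictly inside $L_\gamma$ under $E$; checking that the inclusion $\{j : x\in S_j\}\subseteq\{j : y\in S_j\}$ really yields the monotonicity $\hat h(y)\ge\hat h(x)$ needed for star-shapedness; and confirming that no extra failure probability is incurred beyond the union bound $m\beta^n$ already paid in the theorem, since the three properties are deterministic given $E$.
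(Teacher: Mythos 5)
Your argument is correct and is essentially the paper's own: condition on the event of the preceding theorem that $\bar q$ lies in all $m$ convex hulls (hence in every convex summarizing shape), use convexity to get the segment-visibility fact stated just before the corollary, and deduce that along $\overline{x\bar q}$ the inclusion count can only grow, so the $\gamma$-layer contains $\bar q$, is star-shaped about it, and is therefore connected, all with the theorem's probability bound. No gap; the only difference is that you spell out the bookkeeping the paper leaves implicit.
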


\section{Shapes of $\c{A}_{f,n}$ for Various Summarizing Shapes}
\label{app:shapes}
Let $\c{A}_{f,n}$ be the intersection of $O(n)$ shapes from $\hat{\c{A}}_f$ where $(\mu_p, \hat{\c{A}}_f)$ has VC-dimension $\hat{\nu}_f$.  
Let a \emph{wedge} of $\c{A}_{f,n}$ be a shape from $\c{W}_{f,n}$ described by the intersection of $d$ hyperplanes and one shape from $\hat{\c{A}}_f$.  

\begin{lemma}
The VC-dimension of $(\b{R}^d, \c{W}_{f,n})$ is $d(d+1) + \hat{\nu}_f$.  
\end{lemma}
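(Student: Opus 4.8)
The plan is to bound the VC-dimension of $(\b{R}^d, \c{W}_{f,n})$ by decomposing a wedge into its constituent pieces and using the standard composition rules for VC-dimension. Recall a wedge $W \in \c{W}_{f,n}$ is the intersection of $d$ halfspaces (one on each side of the $d$ bounding hyperplanes used in the wedge construction) together with one shape $A \in \hat{\c{A}}_f$. So $\c{W}_{f,n}$ is a subfamily of the family obtained by taking Boolean combinations (here, intersections) of $d$ ranges from the halfspace family $\c{H}$ in $\b{R}^d$ and one range from $\hat{\c{A}}_f$.

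First I would recall the two ingredients. The halfspace range space $(\b{R}^d, \c{H})$ has VC-dimension $d+1$, and has shatter dimension $d$; more relevantly, the family of intersections of $d$ halfspaces has VC-dimension $O(d^2)$ — in fact a clean bound is that if $\c{A}_1, \ldots, \c{A}_k$ each have shatter dimension at most $\sigma$, then the family of sets formed by a fixed Boolean combination of one range from each has shatter dimension $O(k\sigma)$, hence VC-dimension $O(k\sigma \log k)$ in general, but for halfspaces one gets the sharper bound $d(d+1)$ for an intersection of $d$ of them because each hyperplane is determined by $d$ points and there are $d$ of them. Then I would add the one extra shape from $\hat{\c{A}}_f$, which contributes its own VC-dimension $\hat{\nu}_f$ additively under intersection (this is exactly the kind of additive composition the appendix already invokes, e.g. ``formed by the intersection of three shapes from families that would each have VC-dimension 3''). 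Combining: the $d$ hyperplanes contribute $d(d+1)$ (each of the $d$ bounding hyperplanes is pinned down by $d$ points on it), and the shape from $\hat{\c{A}}_f$ contributes $\hat{\nu}_f$, giving the claimed $d(d+1) + \hat{\nu}_f$.

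Concretely the key steps are: (1) write $W = H_1 \cap \cdots \cap H_d \cap A$ with $H_i$ halfspaces and $A \in \hat{\c{A}}_f$; (2) observe that a set of $N$ points shattered by $\c{W}_{f,n}$ forces, for each of the $2^N$ subsets, a consistent choice of sign patterns for the $d$ hyperplanes and a range in $\hat{\c{A}}_f$; (3) bound the number of distinct sign patterns realizable by $d$ hyperplanes on $N$ points by $(e N / d)^{d(d+1)}$-type estimates (each hyperplane realizes $O(N^d)$ patterns by the shatter function of halfspaces, and there are $d$ of them, so $O(N^{d^2})$ combined — more carefully $N^{d(d+1)}$ tracking the $+1$), and the number of relevant ranges from $\hat{\c{A}}_f$ by $O(N^{\hat{\nu}_f})$; (4) conclude that the total number of subsets of an $N$-point set induced by $\c{W}_{f,n}$ is $O(N^{d(d+1) + \hat{\nu}_f})$, which for $N = d(d+1) + \hat{\nu}_f + 1$ is strictly less than $2^N$, so such $N$ points cannot be shattered. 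I would phrase step (3) via the counting bound that $d$ hyperplanes in $\b{R}^d$ are parametrized by $d$ points each, so the ``sauer-shelah count'' over the product is at most $N^{d \cdot d} \cdot (\text{lower order})$, and being careful with the affine $+1$ gives $N^{d(d+1)}$.

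The main obstacle I expect is getting the exact constant $d(d+1)$ rather than merely $O(d^2)$: the naive ``VC-dimension of an intersection of $k$ families each of VC-dim $\nu$ is $O(k\nu\log k)$'' bound carries a spurious logarithmic factor, so I would avoid it and instead argue directly via shatter functions/primal shatter dimension — using that a single hyperplane is determined by $d$ points (so $d$ hyperplanes by $d^2$ points) and then add the affine degree of freedom to reach $d(d+1)$, plus $\hat{\nu}_f$ for the final shape. The rest is a routine Sauer–Shelah counting argument. I would also remark that this is exactly the general version of the $d=2$, $\hat{\nu}_f = 3$ special case used in the main text ($2\cdot 3 + 3 = 9$), which serves as a sanity check on the formula.
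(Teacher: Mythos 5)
Your step (4) does not work as stated, and it is the crux of the whole argument. If the number of distinct subsets of an $N$-point set induced by $\c{W}_{f,n}$ is bounded by roughly $N^{\sigma}$ with $\sigma = d(d+1)+\hat{\nu}_f$, then taking $N = \sigma+1$ gives $N^{\sigma} = N^{N-1}$, which is far \emph{larger} than $2^N$, not smaller; no contradiction with shattering follows. A Sauer--Shelah/shatter-function product argument (multiplying the $O(N^d)$ patterns per halfspace and the $O(N^{\hat{\nu}_f})$ patterns from $\hat{\c{A}}_f$) only bounds the \emph{shatter dimension} of the intersection family, and converting shatter dimension $\sigma$ to VC-dimension inherently costs the factor you were trying to avoid: it yields only $\nu = O(\sigma\log\sigma)$, i.e.\ $O\bigl((d^2+\hat{\nu}_f)\log(d^2+\hat{\nu}_f)\bigr)$, not the exact value $d(d+1)+\hat{\nu}_f$. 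So your stated plan to ``avoid the spurious logarithmic factor by arguing directly via shatter functions'' is self-defeating: the shatter-function route is precisely where the logarithm comes from. Likewise, the remark that each hyperplane is ``determined by $d$ points'' bounds degrees of freedom (shatter dimension), not VC-dimension, so it cannot by itself upgrade the count to the claimed exact bound.

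The paper takes a different (and much shorter) route: it invokes a composition statement, cited to the Har-Peled book, that if a class is formed by intersecting one range from each of $k$ families of VC-dimensions $\nu_1,\dots,\nu_k$, then the resulting class has VC-dimension $\sum_j \nu_j$; a wedge is the intersection of $d$ halfspaces (VC-dimension $d+1$ each) with one shape from $\hat{\c{A}}_f$ (VC-dimension $\hat{\nu}_f$), giving $d(d+1)+\hat{\nu}_f$ immediately, with no counting over sign patterns. If you want a first-principles proof you would need to establish an additive composition bound of that exact form (not the $O(k\nu\log k)$ one), which your current counting machinery does not deliver; alternatively you could prove the weaker $O\bigl((d^2+\hat{\nu}_f)\log(d^2+\hat{\nu}_f)\bigr)$ bound, which is what your argument actually supports and which would still suffice for the asymptotic sample-size bounds used downstream, but it is not the statement of the lemma.
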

\begin{proof}
It is known that a class of shapes $\c{W}_k$ that is formed as the intersection of $k$ subset from $(\b{R}^d, \c{A}_j)$ for $j \in [1:k]$ which have VC-dimension $\nu_j$, then $(\b{R}^d, \c{W}_k)$ has VC-dimension $\sum_{j=1}^k \nu_j$~\cite{HPbook}.  
Since wedges are formed by the intersection of $d$ halfspaces (VC-dimension $d+1$) and one shape from $\hat{\c{A}}_f$, it follows that $(\b{R}^d, \c{W}_{f,n})$ has VC-dimension $d(d+1) + \hat{\nu}_f$.  
\end{proof}

If the $(d-2)$-dimensional faces of the boundary of $\c{A}_{f,n}$ are subsets of $(d-2)$-dimensional flats (i.e. points in $\b{R}^2$ and line segments in $\b{R}^3$), then any shape from $\c{A}_{f,n}$ can be formed as the disjoint union of $O(n)$ wedges from $\c{W}_{f,n}$.  
Functions which produce such families $\c{A}_{f,n}$ include $\sf{seb}_2$ and \sf{chp} in $\b{R}^2$ and \sf{diam} and \sf{cha} in $\b{R}^d$.  

\begin{remark}\emph{
In cases, such as $\sf{seb}_2$ and \sf{chp} for $d>2$, where we cannot form wedges, we can create similar shapes for $\c{W}_{f,n}$, as generalized cones whose boundary passes through the boundary of each $(d-1)$-dimensional facet of the corresponding shape from $\c{A}_{f,n}$.  For these shapes the VC-dimension of $(\b{R}^d, \c{W}_{f,n})$ can be bounded as $O(\nu_f d \log d)$.   
Each face of a generalized cone is described by two shapes from $\c{A}$, which have VC-dimension $\nu_f$, and a point.  Thus the face of the generalized cone has shatter dimension $O(\nu_f)$.  If a $(d-1)$-dimensional facet of the boundary of a shape from $\c{A}_{f,n}$ has more than $d$ faces of dimension $(d-2)$, then we can triangulate the facet so it has $O(d)$ such faces.  Thus the range space for the generalized cone has shatter dimension $O(\nu_f d)$ and VC-dimension $O(\nu_f d \log d)$.  
}\end{remark}

The VC-dimension for $(\b{R}^d, \c{W}_{f,n})$, $\psi_f$, is shown for several functions in Table \ref{tbl:Afn-VC}.  

\begin{lemma}
If the disjoint union of $m$ shapes from $\c{W}_{f,n}$ can form any shape from $\c{A}_{f,n}$, then  
an $\frac{\eps}{m}$-sample of $(\mu_p, \c{W}_{f,n})$ is an $\eps$-sample of $(\mu_p, \c{A}_{f,n})$.
\end{lemma}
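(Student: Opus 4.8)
The plan is to combine the defining inequality of an $\eps$-sample with the additivity of mass over a disjoint decomposition. Let $Q$ be an $\frac{\eps}{m}$-sample of $(\mu_p, \c{W}_{f,n})$; recalling that the ground set of $(\mu_p, \c{A})$ is $\b{R}^d$ weighted by $\mu_p$ so that $\phi(P) = 1$ and $\phi(R \cap P) = \mu_p(R)$, it suffices to show that for every $A \in \c{A}_{f,n}$,
$$
\left| \frac{\phi(A \cap Q)}{\phi(Q)} - \mu_p(A) \right| \le \eps.
$$

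First I would fix $A \in \c{A}_{f,n}$ and invoke the hypothesis to write $A = W_1 \sqcup W_2 \sqcup \cdots \sqcup W_t$ as a disjoint union of $t \le m$ wedges $W_i \in \c{W}_{f,n}$. Since the $W_i$ partition $A$, both the empirical mass and the true mass split additively: $\phi(A \cap Q) = \sum_{i=1}^{t} \phi(W_i \cap Q)$ and $\mu_p(A) = \sum_{i=1}^{t} \mu_p(W_i)$. Then, applying the triangle inequality and the $\frac{\eps}{m}$-sample property once to each wedge $W_i$,
$$
\left| \frac{\phi(A \cap Q)}{\phi(Q)} - \mu_p(A) \right| \le \sum_{i=1}^{t} \left| \frac{\phi(W_i \cap Q)}{\phi(Q)} - \mu_p(W_i) \right| \le \sum_{i=1}^{t} \frac{\eps}{m} \le \eps.
$$
As $A$ was arbitrary, $Q$ is an $\eps$-sample of $(\mu_p, \c{A}_{f,n})$.

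The only point requiring care — and the one I expect to be the main (minor) obstacle — is the bookkeeping of the shared boundaries in the word ``disjoint union''. For the continuous measure $\mu_p$ this is harmless, since the $(d-1)$-dimensional boundaries between adjacent wedges have $\mu_p$-measure zero; for the finite set $Q$ one should take the decomposition of $A$ to be a genuine set partition (e.g.\ using half-open wedges, or assuming $Q$ avoids the lower-dimensional boundary strata, which holds for generic placement of the lattice defining $Q$ as in Section~\ref{sec:eps-dist}), so that each sample point of $Q$ lies in exactly one $W_i$ and the additivity $\phi(A \cap Q) = \sum_i \phi(W_i \cap Q)$ is exact. Everything else is a direct application of the $\eps$-sample definition.
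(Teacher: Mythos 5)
Your proof is correct and follows essentially the same route as the paper: decompose $A$ into at most $m$ disjoint wedges, apply the $\frac{\eps}{m}$-sample guarantee to each, and sum the errors via the triangle inequality (the paper states this in two sentences; your version just adds the additivity bookkeeping and the measure-zero boundary remark explicitly).
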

\begin{proof}
For any shape $A \in \c{A}_{f,n}$ we can create a set of $m$ shapes $\{W_1, \ldots, W_n\} \subset \c{W}_{f,n}$ whose disjoint union is $A$.  Since each range of $\c{W}_{f,n}$ may have error $\frac{\eps}{m}$, their union has error at most $\eps$.  
\end{proof}

Hence for $(\mu_p, \c{W}_{f,n})$  $\frac{\eps}{n}$-samples can be created of size $O(n^2 \frac{1}{\eps^2} \log \frac{n}{\eps})$ in time $O(n (\frac{n}{\eps})^{2 \psi_f} \log^{\psi_f} \frac{n}{\eps})$. 

\begin{table}[ht]
\small
\caption{Runtimes for $\eps$-Quantizations of Various Summarizing Shape Families.}
\centering
\begin{tabular}{|l|l|c|c|c|}
\hline 
abbrv. & summarizing shape & randomized$^*$& determ. $\b{R}^2$ & determ. $\b{R}^d$
\\ \hline \hline
\sf{dwid} & width along a fixed direction & $O(n \frac{1}{\eps^2} \log \frac{1}{\eps})$ & $O( n^{4}/\eps)$ & $O(n^{4}/\eps)$
\\ \hline
\sf{aabbp} & axis-aligned bounding box measured by perimeter & $O(n \frac{1}{\eps^2} \log \frac{1}{\eps})$ & $\tilde{O}(n^8 / \eps^4)$ & $\tilde{O}(n^{6d} / \eps^{4d})$
\\ \hline
\sf{aabba} & axis-aligned bounding box measured by area & $O(n \frac{1}{\eps^2} \log \frac{1}{\eps})$ & $\tilde{O}(n^{12}/\eps^8)$ & $\tilde{O}(n^{6d} / \eps^{4d})$
\\ \hline
\sf{seb$_\infty$} & smallest enclosing ball, $L_\infty$ metric & $O(n \frac{1}{\eps^2} \log \frac{1}{\eps})$ & $\tilde{O}(n^6/\eps^3)$ & $\tilde{O}(n^{2d+2} / \eps^{d+1})$
\\ \hline
\sf{seb$_1$} & smallest enclosing ball, $L_1$ metric & $O(n \frac{1}{\eps^2} \log \frac{1}{\eps})$ & $\tilde{O}(n^6/\eps^3)$ & $\tilde{O}(n^{2d+2} / \eps^{d+1})$
\\ \hline
\sf{seb$_2$} & smallest enclosing ball, $L_2$ metric & $O(n \frac{1}{\eps^2} \log \frac{1}{\eps})$ & $\tilde{O}(n^{16.5}/\eps^{7})$ & $\tilde{O}((n^{d+3}/\eps^2)^{d+2 - 1/d})$
\\ \hline
\sf{diam} & diameter & $O(n^2 \frac{1}{\eps^2} \log \frac{1}{\eps})$ & $\tilde{O}((n^{5}/\eps^{2})^{n+1})$ & $\tilde{O}((n^5 /\eps^2)^{n+1})$
\\ \hline
\sf{cha} & convex hull measured by area & $O(n \log n \frac{1}{\eps^2} \log \frac{1}{\eps})$ & $\tilde{O}((n^5/\eps^2)^{n+1})$ & $\tilde{O}((n^{d+3}/\eps^2)^{n+1})$
\\ \hline
\sf{chp} & convex hull measured by perimeter & $O(n \log n \frac{1}{\eps^2} \log \frac{1}{\eps})$ & $\tilde{O}((n^5/\eps^2)^{n+1})$ & $\tilde{O}((n^{d+3}/\eps^2)^{n+1})$
\\ \hline
\end{tabular}
\\ $^*$ all randomized results are correct with constant probability.
\\ $\tilde{O}(f(n,\eps))$ ignores poly-logarithmic factors $(\log \frac{n}{\eps})^{O(\textrm{poly}(d))}$ ,  \;\;\;\; for any $\tau>0$.
\label{tbl:results}
\end{table}

\subsection{Examples}
In the examples below, 7 points are given, on which we study a certain measure (e.g., diameter or convex hull area). The grey region denotes the possible placements of a new point, such that the measure will not exceed a given value.  These regions illustrate $\c{A}_{f,n}$ for various summarizing shapes.

  \begin{table}[ht]
\caption{VC-dimension for Various Shape Families.}
\small
\centering
\begin{tabular}{|l|c|c|c|c|}
\hline 
abbrv. &  $(\b{R}^2, \c{A}_{f,n})$ & $(\b{R}^d, \c{A}_{f,n})$ & $(\b{R}^2, \c{W}_{f,n})$& $(\b{R}^d, \c{W}_{f,n})$
\\ \hline \hline
\sf{aabbp} & $O(1)$ $(\sigma = 4)$ & $O(d \log d)$ $(\sigma = 2d)$ & &
\\ \hline
\sf{aabba} &  $8$ & $O(d \log d)$ $(\sigma = 2d)$ & &
\\ \hline
\sf{seb$_\infty$} & $4$ & $2d$ & &
\\ \hline
\sf{seb$_1$} &  $4$ & $2d$ & &
\\ \hline
\sf{seb$_2$} &  $\infty$ & $\infty$ & $9$ & $O(d^2 \log d)$
\\ \hline
\sf{diam} &  $\infty$ & $\infty$ & $9$ & $d^2 + 2d +1$
\\ \hline
\sf{cha} &  $\infty$ & $\infty$ & $7$ & $d^2 + 2d +1$
\\ \hline
\sf{chp} &  $\infty$ & $\infty$ & $O(1)$ & $O(d^2\log d)$
\\ \hline
\end{tabular}
\label{tbl:Afn-VC}
\end{table}

\paragraph{Axis-aligned bounding box.}
Figure \ref{fig:aaba} shows examples of $\c{A}_{f,n}$ for axis-aligned bounding boxes, measuring either by perimeter (\sf{aabbp}) or by area (\sf{aabba}) in $\b{R}^2$.  For both $(\b{R}^2, \c{A}_{f,n})$ has a shatter dimension of $4$ because the shape is determined by the $x$-coordinates of $2$ points and the $y$-coordinates of $2$ points.  This generalizes to a shatter dimension of $2d$ for $(\b{R}^d, \c{A}_{f,n})$, where area generalizes to $d$-dimensional volume, and perimeter generalizes to the $(d-1)$-volume of the boundary.  We can also show the VC-dimension of $(\b{R}^2, \c{A}_{f,n})$ is $8$ for \sf{aabbp} because its shape is defined by the intersection of halfspaces with $4$ predefined normal directions at $0^\circ$, $45^\circ$, $90^\circ$, and $135^\circ$.  This can be generalized to higher dimensions.  

  \tweeplaatjes {shape-aabbp} {shape-aabba}
  {\label{fig:aaba} (a) Axis-aligned bounding box, measured by perimeter. (b) Axis-aligned bounding box, measured by area. The curves are hyperbola parts.}

Hence, for both shapes we can create $n$ $\frac{\eps}{n}$-samples of $(\mu_{p_i}, \c{A}_{f,n})$ of size $\alpha_f(n,\eps) = O(\frac{n^2}{\eps^2} \log \frac{n}{\eps})$ in time $O(\frac{n^3}{\eps^2} \log^2 \frac{n}{\eps})$.  For \sf{aabbp} in $\b{R}^2$, an $\frac{\eps}{n}$-sample of each $(\mu_{p_i}, \c{A}_{f,n})$ of can be reduced further to size $O(\frac{n}{\eps} \log^{16} \frac{n}{\eps})$ in total time $O(\frac{n^5}{\eps^4} \log^{40} \frac{n}{\eps})$.  
Then we can construct the $\eps$-quantization in $(n^{6d} / \eps^{4d}) (\log \frac{n}{\eps})^{O(d)}$ time, using orthogonal range searching.  For \sf{aabbp} in $\b{R}^2$, the runtime improves to $O(n^8/\eps^4 \log^{65} \frac{n}{\eps})$.

\paragraph{Smallest enclosing ball.}
Figure \ref{fig:seb} shows examples of $\c{A}_{f,n}$ for smallest enclosing ball, for metrics $L_\infty$ (\sf{seb}$_{\infty}$), $L_1$ (\sf{seb}$_1$), and $L_2$ (\sf{seb}$_2$) in $\b{R}^2$.  For $\sf{seb}_\infty$ and $\sf{seb}_1$, $(\b{R}^d, \c{A}_{f,n})$ has VC-dimension $2d$ because the shapes are defined by the intersection of halfspaces from $d$ predefined normal directions.  
For $\sf{seb}_1$ and $\sf{seb}_\infty$, we can create $n$ $\frac{\eps}{n}$-samples of each $(\mu_{p_i}, \c{A}_{f,n})$ of size $\alpha_f(n, \eps) = O(\frac{n^2}{\eps^2} \log \frac{n}{\eps})$ in total time $O(\frac{n^3}{\eps^2} \log^2 \frac{n}{\eps})$.  The size for each can be reduced to $O(\frac{n}{\eps} \log^{2d} \frac{n}{\eps})$ in $O(\frac{n^5}{\eps^4} \log^{8d} \frac{n}{\eps})$ total time.  Using an orthogonal range searching data structure we can calculate the $\eps$-quantization in $O(n^{2d+2}/\eps^{d+1} \log^{7d-1} \frac{n}{\eps})$ time.  

  \drieplaatjes {shape-seblinfty} {shape-sebl1} {shape-sebl2}
  {\label{fig:seb} (a) Smallest enclosing ball, $L_\infty$ metric. (b) Smallest enclosing ball, $L_1$ metric. (c) Smallest enclosing ball, $L_2$ metric. The curves are circular arcs of two different radii.}

For $\sf{seb}_2$, $(\b{R}^d, \c{A}_{f,n})$ has infinite VC-dimension, but $(\b{R}^2, \c{W}_{f,n})$ has VC-dimension $\leq 9$ because it is the intersection of $2$ halfspaces and one disc.  
Any shape from $\c{A}_{f,n}$ can be formed from the disjoint union of $2n$ wedges.  
Choosing a point in the convex hull of the $n-1$ points describing $\c{A}_{f,n}$ as the vertex of the wedges will ensure that each wedge is completely inside the ball that defines part of its boundary.
Thus, in $\b{R}^2$ the $n$ $\frac{\eps}{n}$-samples of each $(\mu_{p_i}, \c{A}_{f,n})$ are of size $\alpha_f(n,\eps) = O(n^4/\eps^2 \log \frac{n}{\eps})$ and can all be calculated in $O(n^{5}/\eps^{2} \log^2 \frac{n}{\eps})$ time.  And then the $\eps$-quantization can be calculated in $O(n^{16.5}/\eps^7 \log^{3.5} \frac{n}{\eps})$ time, using range searching data structures.

For $\sf{seb}_2$, in $\b{R}^d$ for $d>3$, we can form shapes from $\c{A}_{f,n}$ with disjoint unions of generalized cone from a family $\c{W}_{f,n}$, where $(\b{R}^d, \c{W}_{f,n})$ has shatter dimension $O(d^2)$.  We need $O(n^{\lfloor d/2 \rfloor})$ such shapes from $\c{W}_{f,n}$ to form one shape $A \in \c{A}_{f,n}$, because $A$ has boundary described by $O(n^{\lfloor d/2 \rfloor})$ sphere piece with one of $d$ different radii.  
The VC-dimension of each $(\mu_{p_i}, \c{W}_{f,n})$ is $O(d^2 \log d)$ in $\b{R}^d$, and we can create $n$ $\frac{\eps}{n}$-quantization of each $(\mu_{p_i}, \c{A}_{f,n})$ of size $\alpha_f(n,\eps) = O(n^{2+2\lfloor d/2 \rfloor} / \eps^2 \log \frac{n}{\eps})$ in $O(n^{3+2\lfloor d/2 \rfloor} /\eps^2 \log^2 \frac{n}{\eps})$ total time.
Then the $\eps$-quantization can be computed in $O((n^{d+3}/\eps^2)^{d+2 - 1/d})$ time (ignoring boundary cases with floor operations) using a range searching data structure.

\paragraph{Diameter.}
Figure \ref{fig:diam} shows an example of $\c{A}_{f,n}$ for the diameter of a point set in $\b{R}^2$.  Here $(\b{R}^d, \c{A}_{f,n})$ has infinite VC-dimension.  It is formed by the intersection of balls of the same radius centered at the points.  Thus a shape from $\c{A}_{f,n}$ is determined by at most $n$ balls, and since they are each the same radius, we can construct a shape from $\c{A}_{f,n}$ from the disjoint union of $n$ wedges, as with \sf{seb}$_2$.  And since each wedge is the intersection of $d$ halfspaces and 1 disc, $(\b{R}^d, \c{W}_{f,n})$ has VC-dimension $(d+1)^2$.  
Thus we can construct $n$ $\frac{\eps}{n}$-samples for each $(\mu_{p_i}, \c{A}_{f,n})$ of size $\alpha_f(n,\eps) = O(n^4/\eps^2 \log \frac{n}{\eps})$ in total time $O(n^5 / \eps^2 \log^2 \frac{n}{\eps})$.  
However, given a set of $n$ points, the shape which defines the set of points where the diameter will not increase has complexity $O(n)$.  This is the family $\c{A}_{f,n+1}$, the union of $n$ balls.  This implies the size of the basis $\sigma_f$ used in Algorithm \ref{alg:count-prob} is $n$.   Hence, it takes time $O((n^5/\eps^2 \log \frac{n}{\eps})^{n+1})$ to construct the $\eps$-quantization.  

  \eenplaatje {shape-diam}
  {\label{fig:diam} Diameter. The curves are circular arcs all of the same radius.}

\paragraph{Convex hull.}
Figure \ref{fig:ch} shows examples of $\c{A}_{f,n}$ for the convex hull, measured either by area (\sf{cha}) or perimeter (\sf{chp}) in $\b{R}^2$. For both, $(\b{R}^2, \c{A}_{f,n})$ has infinite VC-dimension.  
For \sf{cha} $(\b{R}^2, \c{W}_{f,n})$ has VC-dimension $7$, because wedges are triangles.  In higher dimensions \sf{cha} can continue to use wedges, but needs $O(n^{\lfloor d/2 \rfloor})$ of them.
For \sf{chp}, the wedges boundary is described by $d$ hyperplanes and an ellipse boundary part. We cannot guarantee that the intersection of all of these parts describes the wedge because the ellipse may be too small and may cut off part of the intersection of halfspaces.  But in $\b{R}^2$ the wedge clearly does have shatter dimension $4 + 5$, so the VC-dimension of $(\b{R}^2, \c{W}_{f,n})$ is $O(1)$.  In higher dimensions we can use generalized cone shapes with VC-dimension $O(d^2 \log d)$  and we may need $O(n^{\lfloor d/2 \rfloor})$ of them.

  \tweeplaatjes {shape-chp} {shape-cha}
  {\label{fig:ch} (a) Convex hull, measured by perimeter. The curves are ellipse parts. (b) Convex hull, measured by area.}

For both \sf{cha} and \sf{chp} we can calculate $n$ $\frac{\eps}{n}$-samples for each $(\mu_{p_i}, \c{A}_{f,n})$ of size $\alpha_f(n,\eps) = O(n^{2+2\lfloor d/2 \rfloor} / \eps^2 \log \frac{n}{\eps})$ in $O(n^{3+2\lfloor d/2 \rfloor} /\eps^2 \log^2 \frac{n}{\eps})$ total time.
However, given a set of $n$ points, the shape which defines the set of points where the convex hull will not increase has complexity $O(n)$.  This is the family $\c{A}_{f,n+1}$.  Like \sf{diam}, this implies the size of the basis $\sigma_f$ used in Algorithm \ref{alg:count-prob} is $n$.   Hence, it takes time $O((n^{d+3}/\eps^2 \log \frac{n}{\eps})^{n+1})$ to construct the $\eps$-quantizations.  

 \begin{table}[h!!t]
\caption{$\eps$-Samples for Summarizing Shape Family $\c{A}_{f,n}$.}
\small
\centering
\begin{tabular}{|l|c|c|c|c|c|c|}
\hline 
abbrv. & $\alpha_f(n,\eps)$ & $\eta = n \alpha_f(n,\eps)$ & $\nu_f$ & $\eta^{\nu_f}$ & $\sf{RS}_f(n,\eps)$ & runtime
\\ \hline \hline
\sf{aabbp} & $\tilde{O}(\frac{n^2}{\eps^2})$ & $\tilde{O}(n^3 / \eps^2)$ & $2d$ & $\tilde{O}(n^{6d}/\eps^{4d})$ & $\tilde{O}(1)$ & $\tilde{O}(n^{6d}/\eps^{4d})$
\\ \hline
\sf{aabba} & $\tilde{O}(\frac{n^2}{\eps^2})$ & $\tilde{O}(n^3 / \eps^2)$ & $2d$ & $\tilde{O}(n^{6d}/\eps^{4d})$ & $\tilde{O}(1)$ & $\tilde{O}(n^{6d}/\eps^{4d})$
\\ \hline
\sf{seb$_\infty$} & $\tilde{O}(\frac{n}{\eps})$ & $\tilde{O}(n^2 / \eps)$ & $d+1$ & $\tilde{O}(n^{2d+2}/\eps^{d+1})$ & $\tilde{O}(1)$ & $\tilde{O}(n^{2d+2}/\eps^{d+1})$
\\ \hline
\sf{seb$_1$} & $\tilde{O}(\frac{n}{\eps})$ & $\tilde{O}(n^2 / \eps)$ & $d+1$ & $\tilde{O}(n^{2d+2}/\eps^{d+1})$ & $\tilde{O}(1)$ & $\tilde{O}(n^{2d+2}/\eps^{d+1})$
\\ \hline
\sf{seb$_2$} & $\tilde{O}(\frac{n^{d+2}}{\eps^2})$ & $\tilde{O}(n^{d+3}/\eps^2)$ & $d+1$ & $\tilde{O}((n^{d+3} / \eps^{2})^{d+1})$ & $O((n^{d+3}/\eps^2)^{1-1/d})$ & $\tilde{O}((n^{d+3}/\eps^2)^{d+2-1/d})$
\\ \hline
\sf{diam} & $\tilde{O}(\frac{n^4}{\eps^2})$ & $\tilde{O}(n^5/\eps^2)$ & $n$ & $\tilde{O}(n^5/\eps^2)^n$ & $O(n^5 / \eps^2)$ & $\tilde{O}((n^5/\eps^2)^{n+1})$
\\ \hline
\sf{cha} &  $\tilde{O}(\frac{n^{d+2}}{\eps^2})$ & $\tilde{O}(n^{d+3}/\eps^2)$ & $n$ & $\tilde{O}((n^{d+3}/\eps^2)^n)$ & $O(n^{d+3}/\eps^2)$ & $\tilde{O}((n^{d+3}/\eps^2)^{n+1})$
\\ \hline
\sf{chp} &  $\tilde{O}(\frac{n^{d+2}}{\eps^2})$ & $\tilde{O}(n^{d+3}/\eps^2)$ & $n$ & $\tilde{O}((n^{d+3}/\eps^2)^n)$ & $O(n^{d+3}/\eps^2)$ & $\tilde{O}((n^{d+3}/\eps^2)^{n+1})$
\\ \hline
\end{tabular}
\label{tbl:Afn-size}
\\  $\tilde{O}(f(n,\eps))$ ignores poly-logarithmic factors $(\log \frac{n}{\eps})^{O(\textrm{poly}(d))}$.
\end{table}

\end{document}